\newcommand{\KCG}[2]{\ensuremath{\mathcal{C}_{#1}(#2)}\xspace}
\newcommand{\KKG}[2]{\ensuremath{\mathcal{K}_{#1}(#2)}\xspace}
\newcommand{\period}{\ensuremath{p}\xspace}
\newcommand{\periods}{\ensuremath{P}\xspace}
\newcommand{\event}{\ensuremath{e}\xspace}
\newcommand{\events}{\ensuremath{E}\xspace}
\newcommand{\room}{\ensuremath{r}\xspace}
\newcommand{\rooms}{\ensuremath{R}\xspace}
\newcommand{\conflicts}{\ensuremath{L}\xspace}
\newcommand{\resources}{\ensuremath{\periods\times \rooms}\xspace}
\newcommand{\kemperecolor}{\textsc{Kempe\-Re\-con\-figu\-ration}\xspace}
\newcommand{\vertexelimination}{\textsc{Vertex\-Elimination}\xspace}
\newcommand{\instance}{\ensuremath{\mathcal{I}}\xspace}
\newcommand{\powerset}{\ensuremath{\mathcal{P}}}
\newcommand{\Glist}{\ensuremath{H_G}\xspace}
\newcommand{\kx}{\ensuremath{\kappa}\xspace}
\newcommand{\timetable}{\ensuremath{\tau}\xspace}
\newcommand{\availability}{\ensuremath{\alpha}\xspace}
\newcommand{\roomassignment}{\ensuremath{\rho}\xspace}
\newcommand{\nat}{\ensuremath{\mathbb{N}}\xspace}
\newcommand{\uline}[1]{\overset{#1}{\ \rule[3pt]{1em}{0.5pt}\ }}
\newcommand{\diameter}{\operatorname{diam}}
\newcommand{\ordering}{\ensuremath{\sigma}\xspace}
\newcommand{\orderings}{\ensuremath{S}\xspace}
\newcommand{\subdegeneracy}{\ensuremath{\operatorname{subdeg}}\xspace}
\newcommand{\subdegeneracyU}{\ensuremath{\operatorname{subdeg_{\mathit ub}}}\xspace}
\newcommand{\degeneracy}{\ensuremath{\operatorname{deg}}}
\newcommand{\mindegree}{\ensuremath{\delta}\xspace}
\newcommand{\vdegree}{\ensuremath{d}\xspace}
\newcommand{\eq}{\ensuremath{\mathrel{\sim}}\xspace}
\newcommand{\ereq}{\ensuremath{\mathrel{\sim_{E}}}\xspace}
\newcommand{\keq}{\ensuremath{\mathrel{\sim_{K}}}\xspace}
\newcommand{\feq}{\ensuremath{\mathrel{\sim_{f}}}\xspace}
\newcommand{\fneq}{\ensuremath{\mathrel{\not\sim_{f}}}\xspace}
\newcommand{\argmin}{\operatornamewithlimits{argmin}}
\newcommand{\erlangen}{\texttt{erlangen}\xspace}
\newcommand{\pspace}{\ensuremath{\mathrm{PSPACE}}\xspace}
\theoremstyle{definition}
\newtheorem{theorem}{Theorem}
\newtheorem{corollary}{Corollary}
\newtheorem{definition}{Definition}
\newtheorem{proposition}{Proposition}
\newtheorem{remark}{Remark}
\newdimen\@@@tmpa
\newdimen\@@@tmpb
\newdimen\@@@tmpc
\newdimen\@@@tmpd
\def\clap#1#2#3{
       \setbox0=\hbox{#1}\setbox1=\hbox{#2}%
       \@@@tmpa\wd0\@@@tmpb\wd1\advance\@@@tmpa-\@@@tmpb%
       \ifdim\@@@tmpa>0pt\@@@tmpb\wd0%
       \else\@@@tmpb\wd1\fi\@@@tmpc\ht0\@@@tmpd\ht1%
       \advance\@@@tmpc\dp0\advance\@@@tmpd\dp1%
       \advance\@@@tmpc-\@@@tmpd\divide\@@@tmpc2%
       \ifdim\@@@tmpc>0pt\@@@tmpd\ht0\advance\@@@tmpd\dp0\@@@tmpc-\dp0%
       \else\@@@tmpd\ht1\advance\@@@tmpd\dp1\advance\@@@tmpc-\dp0\fi%
       \ifx#3\empty\else\advance\@@@tmpc#3\fi%
       \leavevmode\raise\@@@tmpc\hbox to \@@@tmpb{\rlap{\hbox to \@@@tmpb{\hss%
            \vbox to \@@@tmpd{\vss\box0\vss}\hss}}%
            \hss\vbox to \@@@tmpd{\vss\box1\vss}\hss}%
       }
\begin{document}

\pagestyle{empty}

\title{On the Connectedness of Clash-free Timetables\footnote{This is the extended version of~\cite{MW:14} presented at PATAT 2014.}
}

\author{Moritz M\"uhlenthaler\thanks{Research funded in parts by the School of
Engineering of the University of Erlangen-Nuremberg.} \and Rolf Wanka}

\maketitle

\begin{acronym}[]
\acro{ACO}{Ant Colony Optimization}
\acro{ACP}{Algorithm Configuration Problem}
\acro{ASP}{Answer Set Programming}
\acro{AMOSA}{Archived Multi-Objective Simulated Annealing}
\acro{BACP}{Balanced Academic Curriculum Problem}
\acro{BFS}{breadth-first search}
\acro{BSSP}{Basic Student Sectioning Problem}
\acro{CB-CTT}{Curriculum-based Course Timetabling}
\acro{CNF}{conjunctive normal form}
\acro{DPLL}{Davis–Putnam–Logemann–Loveland}
\acro{EA}{Evolutionary Algorithm}
\acro{EIP}{Event Insertion Problem}
\acro{ETP}{Examination Timetabling Problem}
\acro{ETT}{Examination Timetabling}
\acro{FAU}{Friedrich-Alexander-Universit{\"a}t}
\acro{GA}{Genetic Algorithms}
\acro{GBACP}{Generalized Balanced Academic Curriculum Problem}
\acro{GD}{Great Deluge}
\acro{GGA}{Grouping Genetic Algorithm}
\acro{GLBOP}{Generalized Lexicographic Bottleneck Optimization Problem}
\acro{HC}{Hill Climbing}
\acro{HGHH}{Hybrid Graph-based Hyper Heuristic}
\acro{ILP}{Integer Linear Programming}
\acro{ILS}{Iterated Local Search}
\acro{IP}{Integer Programming}
\acro{ITC2007}{International Timetabling Competition 2007}
\acro{JFI-CB-CTT}{Jain's Fairness Index Curriculum-based Course Timetabling}
\acro{KIH}{Kempe Insertion Heuristic}
\acro{KX}{Kempe-exchange Neighborhood}
\acro{LBAP}{Lexicographic Bottleneck Assignment Problem}
\acro{LBFS}{lexicographic breadth-first search}
\acro{LBOP}{Lexicographic Bottleneck Optimization Problem}
\acro{LCD}{Largest Color Degree}
\acro{LD}{Largest Degree}
\acro{LSAP}{Linear Sum Assignment Problem}
\acro{LSD}{Least Saturation Degree}
\acro{LS}{Local Search}
\acro{SLS}{Stochastic Local Search}
\acro{LVOP}{Lexicographic Vector Optimization Problem}
\acro{LWD}{Largest Weighted Degree}
\acro{ME}{Move Event Neighborhood}
\acro{MMF-CB-CTT}{Max-min Fair Curriculum-based Course Timetabling}
\acro{NOLH}{Nearly Orthogonal Latin Hypercubes}
\acro{PE-CTT}{Post-enrollment Course Timetabling}
\acro{PKC}{Pair-wise Kempe-Chain}
\acro{PSO}{Particle Swarm Optimization}
\acro{RO}{Random Ordering}
\acro{SA}{Simulated Annealing}
\acro{SAT}{Boolean satisfiability}
\acro{SE}{Swap Event Neighborhood}
\acro{SOP}{Min-sum Optimization Problem}
\acro{SO}{Saturation Ordering}
\acro{SSP}{Student Sectioning Problem}
\acro{STP}{School Timetabling Problem}
\acro{TEICH}{Tabu-seach Event Insertion Construction Heuristic}
\acro{TF-CB-CTT}{TechFak Curriculum-based Course Timetabling}
\acro{TS}{Tabu Search}
\acro{uar}{uniformly at random}
\acro{UCTP-OPT}{University Course Timetabling Optimization Problem}
\acro{UCTP}{University Course Timetabling Problem}
\acro{UTP}{University Timetabling Problem}
\acro{VNS}{Variable Neighborhood Search}
\acro{WEO}{weak elimination ordering}
\acro{WMW}{Wilcoxon-Mann-Whitney}
\end{acronym}

\begin{abstract}

We investigate the connectedness of clash-free timetables with respect to the
Kempe-exchange operation. This investigation is related to the connectedness of
the search space of timetabling problem instances, which is a desirable
property, for example for two-step algorithms using the Kempe-exchange during
the optimization step. The theoretical framework for our investigations is
based on the study of reconfiguration graphs, which model the search space of
timetabling problems. We contribute to this framework by including timeslot
availability requirements in the analysis and we derive improved
conditions for the connectedness of clash-free timetables in this setting. 
We apply the theoretical
insights to establish the connectedness of clash-free timetables for a number
of benchmark instances.

\end{abstract}


\section{Introduction}
\label{sec:Intro}


Timetabling problems in the context of a university ask for an assignment of
events (e.g., courses or exams) to rooms and timeslots such that no two
conflicting events are scheduled simultaneously.  By a straightforward
reduction from vertex coloring it is immediate that such timetabling problems
are NP-hard, which motivates the use of (meta-)heuristics in order to solve
timetabling problems in practice; see for example~\cite{Schaerf:99} for an
overview of different problem models and solution approaches.
According to the classification of heuristic optimization algorithms
for timetabling problems in~\cite{Lewis:phd}, many approaches in the literature
fall in the category of \emph{two-step optimization algorithms}. The general
procedure is the following: In the first step, the underlying search problem is
solved and the resulting feasible solution is used as a starting point for the
second step, during which the optimization is performed. In the second step
only feasible solutions are considered. A recent example of a state-of-the-art
two-step approach is~\cite{LH:10}, numerous other examples can be found
in~\cite{Lewis:phd}. During the optimization step, feasible timetables are
modified using Kempe-exchanges or similar operations that preserve their
feasibility. It is natural to ask whether any feasible timetable, in particular
an optimal one, can be reached from an initial feasible timetable. We give a
partial answer to this question by investigating conditions that establish the
connectedness of the search space of \emph{clash-free} timetables. 

A timetable is clash-free, if no two conflicting events are scheduled
simultaneously. We model the structure of the search space of
clash-free timetables in terms of reconfiguration graphs. Such graphs
arise in the context of reconfiguration problems: Given an
instance \instance of a combinatorial search problem, the corresponding
\emph{reconfiguration problem} asks whether one feasible solution to \instance
can be transformed into another feasible solution in a step-by-step manner by
making local changes, such that each intermediate solution is also feasible. A
\emph{reconfiguration graph} has as nodes the feasible solutions of the
underlying combinatorial problem and two such solutions are adjacent whenever
there is a local change that transforms one into the other.
Reconfiguration variants of classical combinatorial problems and their
reconfiguration graphs have been studied in the literature ~\cite[see
e.g.]{Bonsma:12,GKMP:09,IDHP:11,KMM:11,Kaminski:12}. 
The heart of the matter
of timetabling problems in the academic context (in contrast to the ``high school timetabling'' model, see~\cite[Section 2]{deWerra:85}, \cite{Pillay:14}) is the vertex coloring problem:
A clash-free timetable corresponds to a proper coloring of the event conflict
graph, see e.g.~\cite{deWerra:85}. \cite{CHJ:08} have shown that determining
the connectedness of any two proper colorings of a graph is \pspace-complete for
four or more colors and tractable otherwise, in a setting where an admissible
local change alters the color of a single vertex. 
A related line of research
deals with the question whether two given proper colorings are connected, see
e.g.,~\cite{BC:09,Wrochna:15}.
\cite{VM:81} 
give sufficient (but not necessary) conditions for the connectedness of any two
vertex colorings with respect to the Kempe-exchange operation.   
The Kempe-exchange is a generalization of the local change mentioned above. It
is a popular operation used by algorithms for timetabling problems for
exploring the search space, including many of the two-step algorithms in the
references above. Therefore, the results of~\cite{VM:81} are applicable in the
timetabling context, see Corollary~\ref{cor:kkg_connectedness}.   Basically,
the clash-free timetables are connected if the number of timeslots is
sufficiently large compared to the degeneracy of the graph of event conflicts.
Fortunately, this condition can be checked efficiently.

Clash-freeness is typically not the only requirement for
a timetable to be feasible. In many problem formulations in research and
practice~\cite[e.g.,]{deWerra:96,Carter:00,SH:07,Bonutti:12}, certain
timeslots or rooms may be unavailable/unsuitable for particular events and it
is required that each event is placed strictly in the available rooms and
timeslots. We employ a standard reduction from list to vertex coloring in order
to include timeslot availability requirements in the reconfiguration model. We  
show that this approach leads to a faithful representation of the search space
with respect to its connectedness and its diameter. It turns out that due to
the nature of the reduction the condition in
Corollary~\ref{cor:kkg_connectedness} is too strict to be useful for certifying
the connectedness of the clash-free timetables that satisfy the timeslot
availability requirements.  However, we extend the techniques
from~\cite{VM:81} to derive improved conditions in this setting.
For this purpose, we introduce the
\emph{subdegeneracy} of a graph, which generalizes the notion of degeneracy by
ignoring the potential contribution to the degeneracy of a given subgraph. We
show that the clash-free timetables that satisfy the timeslot availability
requirements are connected with respect to the Kempe-exchange if there are
sufficiently many timeslots compared to the subdegeneracy of the conflict graph
and a suitably chosen subgraph.  In contrast to the degeneracy, which can be
computed in linear time, the computational complexity of determining the
subdegeneracy is an open problem and we propose a heuristic solution approach.
We further provide data on the connectedness of the clash-free timetables for a
number of benchmarking instance sets, including artificial and real-world
instance, with and without taking timeslot availability requirements into
account. 
%

The remainder of this work is organized as follows: In
Section~\ref{sec:background} we provide the basic formalisms required for our
analysis of the connectedness of clash-free timetables presented in
Section~\ref{sec:connectedness}. In Section~\ref{sec:results} we investigate
the connectedness of the clash-free timetables for number of standard
benchmarking instance sets.  


\section{Background}
\label{sec:background}

\subsection{The \acl{UTP}}
\label{sec:background:uctp}

The \acf{UTP} formalizes in terms of a search problem the task of creating a
course or examination schedule at a university.

\begin{definition}[\acf{UTP}]
\label{def:uctp-search}
\mbox{}\newline
\emph{INSTANCE:}
\begin{itemize}
	\item a set of events $\events = \{ \event_1,\dotsc,\event_n \}$ 
	\item a set of rooms $\rooms = \{\room_1,\dotsc,\room_\ell\}$
	\item a set of timeslots $\periods = \{\period_1,\dotsc,\period_k\}$
	\item a graph $G = (\events, \conflicts)$ with nodes \events 
and edges $\conflicts \subseteq \{ \{u, v\} \mid u, v \in \events \}$
\end{itemize}
The graph $G$ is referred to as the \emph{conflict graph}\index{conflict graph}. 
Two events are called \emph{conflicting}\index{conflict} if they are adjacent in $G$.
An element of the set \resources is referred to as \emph{resource}\index{resource}.  
A \emph{timetable} \index{timetable} $\timetable$ is an assignment $\tau :
\events \rightarrow \resources$. Two events $\event, \event'$ are
\emph{overlapping}\index{overlap}, if $\event \neq \event'$ and $\tau(\event) =
\tau(\event')$. A timetable is called \emph{overlap-free}\index{overlap-free
timetable} if no two events overlap. Two events $\event, \event'$ are
\emph{clashing} \index{clash} in $\timetable$, if they are conflicting and they
are assigned to the same timeslot. A timetable is \emph{feasible}, if it is
clash-free and overlap-free.
\newline
\emph{TASK:}\;Find a feasible timetable.
\end{definition}

It is usually assumed that all timeslots have the same length and that 
each event fits in a single timeslot. 
The \ac{UTP} as defined above is equivalent to the problem given in~\cite[Section
3.4]{deWerra:85} and generalizes many of the more refined problem formulations
in the literature~\cite[see e.g.]{Bonutti:12,ITC2007:CB-CTT,ITC2007:ETT}. The
clash-freeness requirement and its relation to the vertex coloring problem is
the heart of the matter of timetabling problems in the academic
context~\cite[see e.g.]{deWerra:85,Schaerf:99}. Other kinds of requirements
such as \emph{availability requirements} and \emph{precedence
requirements} often occur in practice~\cite[e.g.]{Carter:00,SH:07} and in
the benchmarking problem models~\cite[e.g.]{Bonutti:12,ITC2007:CB-CTT,ITC2007:ETT}.
Later, we will consider the \ac{UTP} above with additional timeslot availability
requirements. These requirements mandate that only specific timeslots can be
assigned to an event. We formalize timeslot availability requirements in terms of
an availability function \availability, which determines for each event the set
of available timeslots 
\[
\availability: \events \rightarrow \powerset(\periods)\enspace.
\]

An important subproblem of the \ac{UTP} is the \emph{room assignment problem}.
Given a timeslot $\period \in \periods$, then events $\events' \subseteq \events$
admit a room assignment, if there is an assignment $\roomassignment:
\events' \rightarrow \rooms$ such that $(\period, \roomassignment(\event))$ is
available for each $\event \in \events'$.

\subsection{Vertex Coloring}
\label{sec:background:vcol}

A graph $G = (V(G), E(G))$, for short $G = (V, E)$, consists of a set of
\emph{vertices} $V$ and a set of \emph{edges} $E \subseteq \{ \{u, v\} \mid u,
v \in V\}$.  Unless stated otherwise, we assume that graphs are loopless and
finite. We denote by $u\uline{}v$ that the vertices $u$ and $v$ are adjacent,
i.e., $\{u,v\} \in E$.  The graph $G[U]$ denotes the subgraph of $G$ induced by
the vertices $U \subseteq V(G)$.  
A \emph{\mbox{(vertex-)}$k$-coloring} of a graph
$G$ is a mapping $c: V \rightarrow \{1,\dotsc,k\}$ that assigns one of the
colors $\{1,\dotsc,k\}$ to each vertex of $G$. A coloring is called
\emph{proper}\index{proper coloring}, if no two adjacent nodes have the same
color. Unless stated otherwise, we will use the term \emph{coloring} as a
shorthand for \emph{proper coloring}. The \emph{vertex $k$-coloring problem} asks,
whether a graph admits a $k$-coloring. A $k$-coloring $c$ of $G$ partitions the
vertices of $G$ into $k$ sets of independent (mutually non-adjacent) vertices called \emph{color classes}.  A color class
$a \in \{1,\ldots,k\}$ contains all vertices of color $a$. We denote by $G(a,
b)$ the bipartite subgraph induced by the color classes $a$ and $b$. A
connected component in $G(a, b)$ is referred to as \emph{Kempe-component}.


Given a set $L(v)$ (called \emph{list}) of available colors for each $v \in V$,
a \emph{list coloring} $c: V \rightarrow \bigcup_{v \in V} L(v)$ of $G$ is a 
coloring of $G$ such that $c(v) \in L(v)$ for each $v \in V$.  Vertex coloring
is a special case of list coloring, where all colors are available for each
node. By using a standard technique~\cite[Proposition
3.2]{deWerra:85} list coloring can be reduced to vertex coloring: Let the
colors be labeled $1,\ldots,k$, where $k = \left\vert \bigcup_{v \in V}
L(v)\right\vert$. Now, let the graph $\Glist$ be a copy of $G$ to which we add a
clique $C$ on $k$ (new) nodes $v_1,\ldots,v_k$. For each $v \in V(G)$, we add an edge
$v\uline{}v_i$ to $\Glist$, whenever $i \notin L(v)$. Clearly, $\Glist$ admits a
$k$-coloring if and only if $G$ admits a list coloring. The problem of deciding
if a given \ac{UTP} instance admits a clash-free timetable that satisfies
timeslot availability requirements is equivalent to deciding if the conflict
graph admits a list coloring, where the $L(\event) = \alpha(\event)$ for each
event \event.

\subsection{The Vertex Coloring Reconfiguration Problem}
\label{sec:background:reconfiguration}

Reconfiguration problems formalize the question, if a solution to a problem
instance can be transformed into another solution in a step-by-step manner by
some reconfiguration operation, such that each intermediate solution is
feasible~\cite{IDHP:11}. To show that a search space is connected we
need to check whether \emph{any two} solutions are connected. In the context
of the vertex coloring problem this question has been investigated for example
in~\cite{Mohar:07,CHJ:08,BC:09,BJLPP:14,FJP:15}. As a reconfiguration operation,
elementary recolorings and Kempe-exchanges have been considered
in the literature. Given a coloring $c$ of a graph
$G$, an \emph{elementary recoloring} changes the color of a single vertex $u$
of $G$ to a color that does not occur in the neighborhood of $u$. Two
$k$-colorings $c_1$ and $c_2$ of $G$ are adjacent, $c_1 \ereq c_2$, if there is
an elementary recoloring that transforms $c_1$ into $c_2$. The Kempe-exchange
is a generalization of the elementary recoloring operation. Given two colors
$a$ and $b$, a Kempe-exchange switches the colors of a Kempe-component, i.e., a
connected component in $G(a, b)$. The result of this operation is a new
coloring, such that, within the Kempe-component, each vertex of the of color
$a$ is assigned to color $b$ and vice versa. An elementary recoloring that
changes the color of a vertex $u$ from $a$ to $b$ is a Kempe-exchange on the
Kempe-component containing $u$ in $G(a,b)$, which is an isolated vertex. 
Two colorings $c_1$ and $c_2$ of
$G$ are adjacent with respect to the Kempe-exchange, denoted by $c_1 \keq c_2$,
if there is a Kempe-exchange that transforms $c_1$ into $c_2$. Each of the two
adjacency relations ${\ereq}$ and ${\keq}$ gives rise to a graph structure on
the set of $k$-colorings of $G$.

\begin{figure}
	\begin{center}
	\begin{tikzpicture}[node distance=10em,vertex/.style={shape=circle,draw,scale=0.5,fill=black},container/.style={shape=circle,draw,inner sep=2pt},rgedge/.style={thick}]
		\node[container] (G1) {
			\begin{tikzpicture}[node distance=5em]
				\node[vertex,label=left:{$1$}] (v1) {};
				\node[vertex,below of=v1,label=left:{$2$}] (v2) {};
				\draw (v1) -- (v2);
			\end{tikzpicture}
		};
		\node[container,right of=G1] (G2) {
			\begin{tikzpicture}[node distance=5em]
				\node[vertex,label=left:{$1$}] (v1) {};
				\node[vertex,below of=v1,label=left:{$3$}] (v2) {};
				\draw (v1) -- (v2);
			\end{tikzpicture}
		};
		\node[container,right of=G2] (G3) {
			\begin{tikzpicture}[node distance=5em]
				\node[vertex,label=left:{$2$}] (v1) {};
				\node[vertex,below of=v1,label=left:{$3$}] (v2) {};
				\draw (v1) -- (v2);
			\end{tikzpicture}
		};
		\node[container,below of=G1] (G4) {
			\begin{tikzpicture}[node distance=5em]
				\node[vertex,label=left:{$3$}] (v1) {};
				\node[vertex,below of=v1,label=left:{$2$}] (v2) {};
				\draw (v1) -- (v2);
			\end{tikzpicture}
		};
		\node[container,below of=G2] (G5) {
			\begin{tikzpicture}[node distance=5em]
				\node[vertex,label=left:{$3$}] (v1) {};
				\node[vertex,below of=v1,label=left:{$1$}] (v2) {};
				\draw (v1) -- (v2);
			\end{tikzpicture}
		};
		\node[container,below of=G3] (G6) {
			\begin{tikzpicture}[node distance=5em]
				\node[vertex,label=left:{$2$}] (v1) {};
				\node[vertex,below of=v1,label=left:{$1$}] (v2) {};
				\draw (v1) -- (v2);
			\end{tikzpicture}
		};

		\draw[rgedge] (G1) -- (G2) -- (G3);
		\draw[rgedge] (G4) -- (G5) -- (G6);
		\draw[rgedge] (G1) -- (G4);
		\draw[rgedge] (G3) -- (G6);
		\draw[rgedge,dashed] (G2) -- (G5);
		\draw[rgedge,dashed] (G1) -- (G6);
		\draw[rgedge,dashed] (G3) -- (G4);
	\end{tikzpicture}
	\end{center}
	\caption{The Kempe-$3$-coloring graph $\KKG{3}{K_2}$ of the graph $K_2$. The subgraph
	  induced by the solid edges corresponds to $\KCG{3}{K_2}$. Each dashed
	  edge represents a Kempe-exchange that cannot be realized by a single
	  elementary recoloring.\label{fig:background:kkg}}
\end{figure}
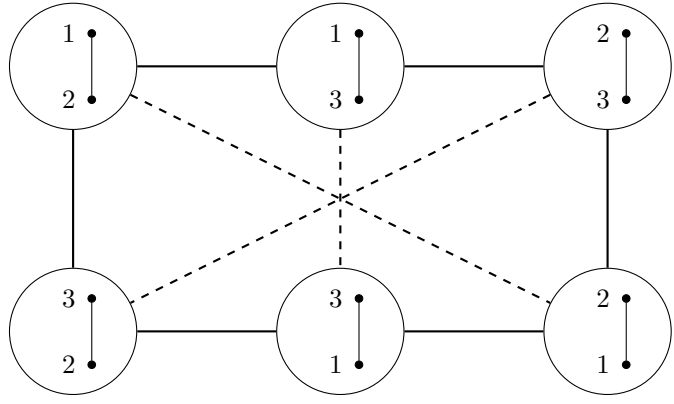

\begin{definition} [(Kempe-)$k$-coloring graph]
  \label{def:kempe_reconfiguration}
  For a graph $G=(V, E)$ and $k \in \nat$ let
  \begin{align*}
	  \mathcal{V}   &:= \{ c: V \rightarrow \{1, \dotsc, k\}  \mid c \text{ is a $k$-coloring of } G \} \\
			  \mathcal{E}_E &:= \{ \{c_1,c_2\} \mid c_1, c_2 \in \mathcal{V} \text{ and } c_1 \ereq c_2 \} \\
				  \mathcal{E}_K &:= \{ \{c_1,c_2\} \mid c_1, c_2 \in \mathcal{V} \text{ and } c_1 \keq c_2 \}\enspace.
  \end{align*}
  Then the $k$-coloring graph is the graph $\KCG{k}{G} = (\mathcal{V},
  \mathcal{E}_E)$. The Kempe-$k$-coloring graph is the graph $\KKG{k}{G} =
  (\mathcal{V}, \mathcal{E}_K)$.
\end{definition}

\begin{algorithm}
  \caption{\kemperecolor}
  \label{alg:kemperecolor}
  \DontPrintSemicolon

  \SetKwInOut{Input}{input}
  \SetKwInOut{InOut}{in/out}
  \SetKwInOut{Output}{output}
  \SetKwInOut{Data}{data}

  \SetKwData{GG}{$G$}
  \SetKwData{HH}{$H$}
  \SetKwData{CC}{$c_1$}
  \SetKwData{DD}{$c_2$}
  \SetKwData{KK}{$K$}
  \SetKwData{COL}{$c$}

  \Input{graph \GG, labeling $v_1,\ldots,v_n$ of the vertices, $k$-colorings \CC, \DD of \GG}
  \Output{list of Kempe-exchanges transforming \CC into \DD}
  \Data{array \COL of length $n$ storing the current color of each vertex, list \KK of Kempe-exchanges}
  \BlankLine
  \KK $\longleftarrow$ empty list\;
  \BlankLine
  \For{$i \longleftarrow 1$ \KwTo $n$}
  {
	$\HH \longleftarrow G[v_1,\ldots,v_i]$\;
	\For{$i \longleftarrow 1$ \KwTo $n$}{$\COL[i] \longleftarrow \CC(v_i)$\;}
	\tcc{Kempe-exchange $\kx = (a, b, u)$, where $a$,$b$ are colors and $u \in V(\HH)$}
	\For{$\kx = (a, b, u) \in \KK$}
	{
		without loss of generality let $a \neq \COL[i]$\;

		\lnl{alg:kemperecolor:one}\If{$\COL[i] = b$ and $v_i$ has exactly one neighbor of color $a$ in \HH}
		{
		  \tcc{Note that the color of $v_i$ will be changed by \kx in \HH}
		}
		\lnl{alg:kemperecolor:two}\If{$\COL[i] = b$ and $v_i$ has at least two neighbors of color $a$ in \HH}
		{
			choose color $b' \neq b$, which is not used by any neighbor of $v_i$ in \HH\;
			insert Kempe-exchange $k = (b, b', v_i)$ right before $\kx$ in \KK  and apply $k$ to \COL\; 
		}

		apply Kempe-exchange \kx to \COL\;
  }
	\lnl{alg:kemperecolor:final}append Kempe-exchange $(c_2(v_i), \COL[i], v_i)$ to \KK\; 
  }

  \Return \KK\;
\end{algorithm}

Figure~\ref{fig:background:kkg} shows as a toy example $\KCG{3}{K_2}$ and
$\KKG{3}{K_2}$, where $K_2$ is the complete two-vertex graph.
Clearly, for any graph $G$ and $k \geq 1$, $\KCG{k}{G} \subseteq \KKG{k}{G}$.
The diameter and the connectedness of (Kempe-)$k$-coloring graphs has been
investigated for example in~\cite{Mohar:07,BJLPP:14,FJP:15}. To the best of
our knowledge, in general graphs and for $k \geq 4$, the complexity of deciding
the connectedness of any two $k$-colorings of a graph is still open. However,
in~\cite{VM:81}, a sufficient (but not necessary) condition for the
connectedness of any two $k$-colorings is given, which relates the
connectedness of the colorings to the degeneracy of the graph to be colored.
A graph $G$ is called \emph{$k$-degenerate}, if its vertices can be linearly ordered
such that each vertex has at most $k$
neighbors preceding it. The smallest $k$ for which $G$ admits such an ordering
is the \emph{degeneracy} $\degeneracy(G)$, which is sometimes also called \emph{width} of $G$. A witness vertex
ordering of $\degeneracy(G)$ can be found in linear time by repeatedly removing
vertices of minimal degree~\cite{Matula:68,SW:68,BZ:03}. Equivalently,
$\degeneracy(G)$ is the largest minimum degree of any subgraph of $G$.  Let
$\orderings(G)$ be the set of orderings of the vertices of $G$ and let
$\operatorname{pred}(v, \ordering)$ denote the number of neighbors of the
vertex $v \in V(G)$ that precede $v$ in the ordering $\ordering \in
\orderings(G)$. In formal terms, the two characterizations of $\degeneracy(G)$
can be stated as follows:
\begin{equation}
\label{eq:degeneracy}
	\degeneracy(G) := \max_{H \subseteq G} \min_{v \in V(H)} \{d_{H}(v)\} = \min_{\ordering \in \orderings(G)} \max_{v \in V(G)} \operatorname{pred}(v, \ordering)\enspace,    
\end{equation}
where $d_H(v)$ denotes the degree of $v$ in $H$.  The degeneracy of a graph is
an upper bound on its chromatic number. In~\cite{VM:81}, degeneracy has been
used in order to establish the connectedness of Kempe-$k$-coloring graphs as follows:

\begin{theorem} [{\cite[Proposition 2.1]{VM:81}}]
\label{thm:kkg_connectedness}
For any graph $G$, the Kempe-$k$-coloring graph $\KKG{k}{G}$ is connected if $k
> \degeneracy(G)$. 
\end{theorem}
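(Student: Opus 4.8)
The plan is to prove the statement by induction on the number of vertices $n=|V(G)|$, reducing connectedness to the existence of paths from an arbitrary coloring to a single fixed \emph{canonical} coloring. Since a Kempe-exchange is its own inverse, the relation $\keq$ is symmetric, so it suffices to exhibit, for each $k$-coloring $c$, a sequence of Kempe-exchanges transforming $c$ into one canonical coloring $c^{*}$; any two colorings are then joined through $c^{*}$. To define $c^{*}$ and to drive the induction, I would fix an ordering $v_1,\dots,v_n$ witnessing $\degeneracy(G)=d$, so that each $v_i$ has at most $d$ neighbors among $v_1,\dots,v_{i-1}$. In particular the last vertex $v:=v_n$ satisfies $\deg_G(v)\le d\le k-1$, and $G':=G[\{v_1,\dots,v_{n-1}\}]$ has $\degeneracy(G')\le d<k$, so the induction hypothesis applies to $G'$. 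The coloring $c^{*}$ is taken to be the greedy coloring along this ordering, which agrees on $\{v_1,\dots,v_{n-1}\}$ with the canonical coloring of $G'$.

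Given $c$, I would first restrict it to $\bar c:=c|_{G'}$ and, by the induction hypothesis, obtain a sequence of Kempe-exchanges of $G'$ taking $\bar c$ to $c^{*}|_{G'}$. The core of the argument is a \emph{lifting} step that simulates this $G'$-sequence inside $G$ one exchange at a time, maintaining a proper coloring of all of $G$ whose restriction to $G'$ follows the $G'$-sequence exactly. Consider one exchange swapping colors $a,b$ on a component $K$ of $G'(a,b)$. If the current color of $v$ is neither $a$ nor $b$, then $v$ is not a vertex of $G(a,b)$, so $K$ is also a component of $G(a,b)$ and the exchange can be carried out in $G$ without touching $v$. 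If $c(v)=a$ (the case $c(v)=b$ is symmetric) and $v$ has at most one $b$-colored neighbor, then $v$ is isolated or pendant in $G(a,b)$ and cannot link two components of $G'(a,b)$; swapping the $G(a,b)$-component that meets $K$ therefore restricts on $G'$ to exactly the swap of $K$, and since $c$ is proper ($v$ has no $a$-colored neighbor) the result stays proper. As $v\notin V(G')$, letting $v$ be swapped along is harmless for the simulation.

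The delicate case, which I expect to be the main obstacle, is $c(v)=a$ with $v$ having at least two $b$-colored neighbors: here $v$ might connect several distinct components of $G'(a,b)$, so a naive exchange in $G$ would swap more of $G'$ than intended and corrupt the simulation. I would resolve this by first moving $v$ out of the $(a,b)$-subgraph via an elementary recoloring to a color $b'\notin\{a,b\}$ unused by $v$'s neighbors, after which we are back in the easy case. This is precisely where the hypothesis is used sharply: $v$ has at most $d$ neighbors, at least two of which share the color $b$, hence its neighbors occupy at most $d-1$ distinct colors; adding the current color $a$ (used by no neighbor) forbids at most $d$ colors, and $k>d=\degeneracy(G)$ guarantees an admissible $b'$. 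Thus every $G'$-exchange lifts to at most two proper Kempe-exchanges of $G$, which is the point where $k>\degeneracy(G)$ is needed rather than a weaker bound.

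Finally, after lifting the entire sequence the graph $G'$ is colored according to $c^{*}|_{G'}$ while $v$ carries some color. Since $v$ has at most $d\le k-1$ neighbors and $c^{*}(v)$ was chosen greedily to avoid the $c^{*}$-colors of those neighbors, the color $c^{*}(v)$ is free for $v$, and a last elementary recoloring sets $v$ to $c^{*}(v)$, yielding $c^{*}$. The base case $n=1$ is immediate, as the $k\ge1$ colorings of a single vertex are pairwise joined by elementary recolorings. The routine verifications I would still discharge are that a Kempe-exchange always preserves properness and that the restriction of each lifted exchange to $G'$ coincides with the intended $G'$-exchange; both follow directly from the component structure of the bipartite subgraphs $G(a,b)$.
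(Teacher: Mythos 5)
Your proposal is correct and takes essentially the same route as the paper: it is the inductive reformulation of Algorithm~\ref{alg:kemperecolor} (\kemperecolor), extending the Kempe-exchange sequence one vertex at a time along a degeneracy ordering and, precisely when the newly considered vertex has two or more neighbors in the other color class of an exchange, parking it on a fresh color whose existence is guaranteed by $k > \degeneracy(G)$ (the counting in your ``delicate case'' is exactly the paper's analysis of case~\ref{alg:kemperecolor:two}). The only cosmetic differences are that you route through a canonical greedy coloring rather than transforming $c_1$ into $c_2$ directly, and that you phrase the vertex-by-vertex extension as induction on $|V(G)|$ instead of as an iterative algorithm.
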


The proofs given in~\cite{VM:81} and~\cite{Mohar:07} are essentially an analysis of the
algorithm \kemperecolor shown in Algorithm~\ref{alg:kemperecolor}. This
algorithm transforms a source coloring $c_1$ into a destination coloring $c_2$
by a sequence of Kempe-exchanges, provided that a sufficient number of colors
is available. The vertices are processed one-by-one according to the given
labeling. The idea behind the algorithm is to prevent that changing the color
of the current vertex interferes with colors of the previously processed
vertices. 

\section{The Connectedness of Clash-free Timetables}
\label{sec:connectedness}

We investigate the connectedness of the search space of clash-free timetables
with respect to the Kempe-exchange operation. In the following, let $G$ be the
conflict graph of a \ac{UTP} instance \instance with timeslots
$\{1,\ldots,\period\}$.
We consider timetables that differ only with respect to the
room assignment to be equivalent. Therefore, each $p$-coloring of $G$
corresponds to an equivalence class of clash-free timetables and the
adjacency relation $\keq$ on the $p$-colorings of $G$ induces an adjacency
relation on the equivalence classes of clash-free timetables. As a consequence, 
$\KKG{p}{G}$ represents the search space of clash-free timetables of clash-free
timetables connected by Kempe-exchanges.  If $\KKG{p}{G}$ is connected, then a
two-step algorithm using Kempe-exchanges for search space exploration can reach
an optimal solution from any starting point.  Otherwise, the algorithm may fail
to find an optimal solution due to the structure of the search space.

A sufficient condition establishing the connectedness of clash-free timetables result
directly from Theorem~\ref{thm:kkg_connectedness}:
\begin{corollary}
\label{cor:kkg_connectedness}
The search space of clash-free timetables is connected if $p > \degeneracy(G)$.
\end{corollary}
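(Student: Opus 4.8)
The plan is to observe that this corollary is an immediate translation of Theorem~\ref{thm:kkg_connectedness} through the correspondence set up in the paragraph preceding the statement. The core work is not a new combinatorial argument but rather verifying that the search space of clash-free timetables, viewed up to room-assignment equivalence and connected by Kempe-exchanges, is \emph{literally} the graph $\KKG{p}{G}$. Once this identification is in place, the conclusion follows by instantiating Theorem~\ref{thm:kkg_connectedness} with $k = p$.

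First I would make the bijection precise. A timetable $\timetable$ assigns each event to a resource in $\resources$; composing with the projection onto the timeslot coordinate yields a map $c_\timetable \colon \events \to \{1,\dotsc,p\}$. By definition, clash-freeness depends only on the timeslots assigned to conflicting events, so $\timetable$ is clash-free exactly when $c_\timetable$ is a proper $p$-coloring of $G$. Two clash-free timetables induce the same coloring $c_\timetable$ precisely when they differ only in their room assignments, i.e.\ when they are equivalent in the sense introduced above. Hence the equivalence classes of clash-free timetables are in bijection with the $p$-colorings of $G$, which are exactly the nodes of $\KKG{p}{G}$. I would also note that every $p$-coloring is realized by some clash-free timetable (for instance by placing all events in a single room), so the correspondence is onto and no coloring is ``missing'' a timetable.

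Second I would check that the adjacency relations match: a Kempe-exchange applied to a clash-free timetable switches the timeslots $a$ and $b$ within a connected component of $G(a,b)$, which is by definition a Kempe-exchange of the associated coloring, i.e.\ the relation $\keq$. Consequently the reconfiguration graph on the equivalence classes of clash-free timetables is isomorphic to $\KKG{p}{G}$. With this isomorphism established, the final step is simply to apply Theorem~\ref{thm:kkg_connectedness}: since the hypothesis $p > \degeneracy(G)$ is exactly the connectedness condition $k > \degeneracy(G)$ for $k = p$, the graph $\KKG{p}{G}$ is connected, and therefore so is the search space of clash-free timetables.

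I do not expect a genuine obstacle here, since the reduction is routine; the only point demanding a little care is confirming that the room coordinate is irrelevant to both the node set (handled by passing to equivalence classes) and the edge set (handled by the observation that a Kempe-exchange acts only on timeslots). Everything else is a direct appeal to the already-proved Theorem~\ref{thm:kkg_connectedness}.
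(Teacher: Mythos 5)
Your proposal is correct and matches the paper's treatment: the paper likewise identifies equivalence classes of clash-free timetables (modulo room assignment) with the $p$-colorings of $G$, notes that $\keq$ induces the corresponding adjacency so that $\KKG{p}{G}$ represents the search space, and then derives the corollary directly from Theorem~\ref{thm:kkg_connectedness}. You merely spell out the bijection and the matching of edge relations more explicitly than the paper, which leaves the corollary without a written proof.
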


In most applications however, clash-freeness is not the only requirement a
timetable needs to satisfy. In addition, timeslot availability requirements,
room availability requirements, and overlap-freeness requirements may restrict
the set of feasible timetables, and, as a consequence, limit the search space
to a certain subgraph of $\KKG{p}{G}$. 
In particular, for the additional requirements above, the search space is
restricted to the following nodes of $\KKG{\period}{G}$: 
\begin{enumerate}
	\item timeslot availability requirements:\label{itm:period_availability}
		\[
			C_{\pi} = \{ c \in V(\KKG{\period}{G}) \mid \forall v \in V(G): c(v) \text{ is available for event $v$} \}
		\]
	\item overlap-freeness and room availability requirements:\label{itm:overlap_freeness}
		\[
			C_{\roomassignment} = \{ c \in V(\KKG{\period}{G}) \mid \forall i \in \periods: \text{ color class $i$ admits a room assignment} \}
		\]
\end{enumerate}

Regarding overlap-freeness and room availability
requirements, to the best of our knowledge, the properties of the corresponding
reconfiguration graphs have not been studied so far. The \emph{bounded vertex
$k$-coloring problem} with bound $b \in \nat$ is the problem of coloring a
graph with $k$ colors such that the size of each color class is at most $b$. The
bounded vertex coloring problem has been studied for example
in~\cite{Lucarelli:09} as well as ~\cite{BC:96} in the
setting of unit-time task scheduling on multiple processors, and
in~\cite{deWerra:97} in the timetabling context. If overlap-freeness is
required and no particular room availability requirements are present, then the
graph $\KKG{\periods}{G}[C_\roomassignment]$ is the reconfiguration graph of a
bounded vertex coloring instance. The reconfiguration variant of the bounded
vertex coloring problem seems to be an interesting problem which deserves
further investigation. The situation gets more involved if room availability
requirements are present: Checking if the $k$ events in a color class admit a
room assignment is equivalent to checking if a suitably chosen bipartite graph
admits a matching of cardinality $k$.

We now investigate conditions that certify the connectedness of
$\KKG{p}{G}[C_\pi]$. Using the standard reduction from list coloring to vertex
coloring described in Section~\ref{sec:background:vcol}, we obtain a graph
\Glist that contains the original conflict graph $G$ and a clique on $p$
additional vertices $v_1,\ldots,v_p$, which is used for representing the
available timeslots for each event. Our goal is to show that the
reconfiguration graph of the $p$-colorings of \Glist is a suitable
representation of the search space of clash-free timetables that satisfy given
timeslot availability requirements. First, we show that $\KKG{p}{\Glist}$ is
connected if and only if $\KKG{p}{G}[C_\pi]$ is connected. Please note that,
due to the nature of the reduction, there are Kempe-exchanges on $p$-colorings
of \Glist for which there is no corresponding Kempe-exchange on $G$; just
consider Kempe-exchanges that involve the nodes $v_1,\ldots,v_p$. We give
further evidence that $\KKG{p}{\Glist}$ is a suitable representation of the
search space $\KKG{p}{G}[C_\pi]$, by showing that their diameters differ only
by a factor linear in $|V(G)|$.

\subsection{Search space representation in terms of $\KKG{p}{\Glist}$}

We first show that $\KKG{p}{\Glist}$ is connected precisely when
$\KKG{p}{G}[C_\pi]$ is connected. For this purpose, we construct from
$\KKG{p}{G}[C_\pi]$ an auxiliary graph $K$.  The vertices of $K$ are the
vertices of $\KKG{p}{G}[C_\pi]$. Any two vertices $u, v \in V(K)$ (i.e.,
colorings of $G$) are adjacent if there is an $u \uline{} v$ edge in
$\KKG{p}{G}$ or if there are two colors $i$ and $j$ such that $u$ can be
transformed into $v$ by swapping the colors in all except a single connected
component of $G(i,j)$.  For technical reasons that involve the construction of
a graph homomorphism to $K$ we add a self-loop to each node of $K$.  From the
construction of $K$ it follows easily that $K$ is connected if and only if
$\KKG{p}{G}$ is connected:

\begin{proposition}
	$K$ is connected if and only if $\KKG{p}{G}$ is connected.
\end{proposition}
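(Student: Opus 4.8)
The plan is to prove the two implications separately, using that $K$ is a spanning supergraph of the coloring graph induced on its own vertex set. Recall that $V(K)=V(\KKG{p}{G}[C_\pi])$ and that, by construction, every Kempe-edge of $\KKG{p}{G}$ between two colorings of $C_\pi$ is also an edge of $K$; thus $K$ contains $\KKG{p}{G}[C_\pi]$ (the graph $\KKG{p}{G}$ restricted to $V(K)$) as a spanning subgraph, together with extra ``partial-swap'' edges and self-loops. The forward direction is then immediate: if $\KKG{p}{G}[C_\pi]$ is connected, any walk realizing this connectivity is also a walk in $K$, so $K$ is connected. The self-loops play no role for connectivity; as announced they are retained only for the homomorphism argument to follow.

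For the converse I would assume $K$ connected and show that every edge of $K$ can be replaced by a walk in $\KKG{p}{G}[C_\pi]$, so that a connecting walk in $K$ induces one in $\KKG{p}{G}[C_\pi]$. There are two kinds of non-loop edges. An edge inherited from $\KKG{p}{G}$ joins two colorings $u,v\in C_\pi$ and is therefore, by definition of the induced subgraph, already an edge of $\KKG{p}{G}[C_\pi]$; nothing has to be done. The work is concentrated on the partial-swap edges, where $v$ arises from $u$ by exchanging two colors $i$ and $j$ on all but one connected component of $G(i,j)$.

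For such an edge I would decompose the simultaneous exchange into single Kempe-exchanges. Writing $X_1,\dots,X_m$ for the components of $G(i,j)$ with $X_1$ the untouched one, define colorings $c_0=u,c_1,\dots,c_{m-1}=v$, where $c_t$ swaps $i,j$ on exactly $X_2,\dots,X_{t+1}$. Exchanging $i$ and $j$ on a component alters neither the set of vertices colored from $\{i,j\}$ nor the edges among them, so $G(i,j)$ has the same components under every $c_t$; hence each step $c_{t-1}\to c_t$ is a bona fide single Kempe-exchange, i.e.\ an edge of $\KKG{p}{G}$. The decisive point --- and the only place where the availability sets enter --- is that each $c_t$ lies in $C_\pi$: a vertex outside the swapped components keeps its color from $u$, which is available because $u\in C_\pi$, while a vertex $w$ in a swapped component has $u(w)\in\{i,j\}$ and $v(w)$ equal to the other of these two colors, so both $i$ and $j$ are available for $w$ (from $u,v\in C_\pi$) and whichever of them $c_t$ assigns is available. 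Thus $c_0,\dots,c_{m-1}$ is a walk inside $\KKG{p}{G}[C_\pi]$ connecting $u$ to $v$, and replacing every edge of a connecting walk in $K$ by the corresponding walk yields connectivity of $\KKG{p}{G}[C_\pi]$. I expect this availability bookkeeping for the partial-swap edges to be the main obstacle; once one observes that both exchanged colors are available for every affected vertex, the decomposition into single Kempe-exchanges is routine.
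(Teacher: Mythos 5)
Your proof is correct and follows the same route as the paper's: the paper's one-line argument observes that the extra edges of $K$ are merely shortcuts for several individual Kempe-exchanges, which is precisely the component-by-component decomposition of the partial-swap edges that you carry out. You additionally verify the detail the paper leaves implicit, namely that the intermediate colorings remain in $C_\pi$ because every vertex of a swapped component of $G(i,j)$ has both colors $i$ and $j$ available.
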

\begin{proof}
The edges which occur in $K$ but not in $G$ are merely shortcuts for several
individual Kempe-exchanges performed on $G$. Therefore, $\KKG{p}{G}[C_\pi]$ is
connected if and only if $K$ is connected.    
\end{proof}

Figure~\ref{fig:homomorphism_example} shows the various graphs under
consideration and how they are related for a small list-coloring instance
consisting of a graph $G=(\{u, v\}, \{u\uline{}v\})$ and color lists
$L(u)=\{1\}$ and $L(v)=\{2\}$.  The nodes $v_1$ and $v_2$ of $\Glist$ were
added by the reduction from list coloring to vertex coloring. 

\begin{figure}
	\begin{center}
	\begin{tikzpicture}[node distance=12em,vertex/.style={shape=circle,draw,scale=0.5,fill=black},container/.style={shape=circle,thick,draw,inner sep=2pt},rgedge/.style={thick},every loop/.style={}]
		\node[] (G) at (0, 0) [label=above:$G$] {
			\begin{tikzpicture}
				\node[vertex,label=above:{$u$}] (v1) {};
				\node[vertex,right=2em of v1,label=above:{$v$}] (v2) {};
				\draw (v1) -- (v2);
			\end{tikzpicture}
		};
		\node[left=1em of G,align=center] () {$L(u) = \{1\}$\\$L(v) = \{2\}$};
		
		\node[below of=G] (G2) [label=below:$\Glist$] {
			\begin{tikzpicture}
					\node[vertex,label=above:{$u$}] (v1) {};
					\node[vertex,right=2em of v1,label=above:{$v$}] (v2) {};
					\node[vertex,below=2em of v1,label=below:{$v_1$}] (v3) {};
					\node[vertex,below=2em of v2,label=below:{$v_2$}] (v4) {};
					\draw (v1) -- (v2);
					\draw (v3) -- (v4);
					\draw (v1) -- (v4);
					\draw (v2) -- (v3);
			\end{tikzpicture}
		};
		\node[] (K) at (6.5, 0) {
			\begin{tikzpicture}
				\node[container,label=above:$K$] (N) {
				\begin{tikzpicture}
					\node[vertex,label=above:{$1$}] (v1) {};
					\node[vertex,right=2em of v1,label=above:{$2$}] (v2) {};
					\draw (v1) -- (v2);
				\end{tikzpicture}
				} edge [thick,in=30,out=55,loop,label=above:{}] ();
			\end{tikzpicture}
		}; 

		\node[below of=K] (KGP) [label=below:$\KKG{2}{\Glist}$] { 
			\begin{tikzpicture}
				\node[container] (N1) {
					\begin{tikzpicture}
						\node[vertex,label=above:{$1$}] (v1) {};
						\node[vertex,right=2em of v1,label=above:{$2$}] (v2) {};
						\node[vertex,below=2em of v1,label=below:{$1$}] (v3) {};
						\node[vertex,below=2em of v2,label=below:{$2$}] (v4) {};
						\draw (v1) -- (v2);
						\draw (v3) -- (v4);
						\draw (v1) -- (v4);
						\draw (v2) -- (v3);
					\end{tikzpicture}
				};
				\node[right=2em of N1,container] (N2) {
					\begin{tikzpicture}
						\node[vertex,label=above:{$2$}] (v1) {};
						\node[vertex,right=2em of v1,label=above:{$1$}] (v2) {};
						\node[vertex,below=2em of v1,label=below:{$2$}] (v3) {};
						\node[vertex,below=2em of v2,label=below:{$1$}] (v4) {};
						\draw (v1) -- (v2);
						\draw (v3) -- (v4);
						\draw (v1) -- (v4);
						\draw (v2) -- (v3);
					\end{tikzpicture}
				};
				\draw[thick] (N1) -- (N2);
			\end{tikzpicture}
		};
		\draw[rgedge,->] (G)   -- node[left,align=center]  {list coloring to\\vertex coloring} (G2);
		\draw[rgedge,->] (G)   -- node[above,align=center] {Graph of list colorings of $G$\\with short-cuts and self-loops} (K);
		\draw[rgedge,->] (G2)  -- (KGP);
		\draw[rgedge,->] (KGP) -- node[right,align=left] {graph homo-\\morphism $f$} (K);
	\end{tikzpicture}
	\end{center}
	\caption{Relations between the graphs $G$, $\Glist$, $K$ and $\KKG{p}{\Glist}$. The
			choice of $G$ and the available colors determines the other graphs
			as described in the text. The existence of the graph homomorphism
			$f$ is established by
			Lemma~\ref{lemma:gh}.\label{fig:homomorphism_example}}
\end{figure}
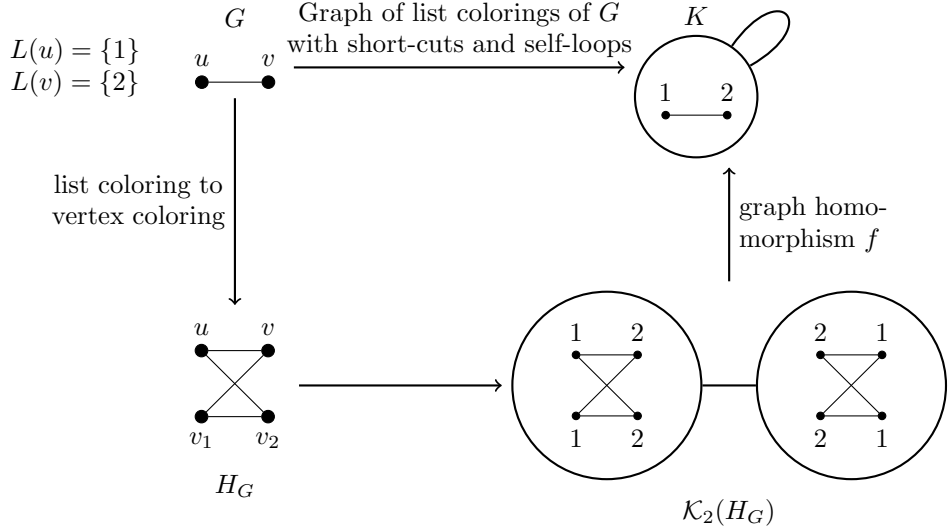

\begin{proposition}
\label{lemma:gh}
	There is a graph homomorphism $f: \KKG{p}{\Glist} \rightarrow K$.
\end{proposition}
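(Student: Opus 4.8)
The plan is to exhibit one explicit map $f$ and to check that it sends every edge of $\KKG{p}{\Glist}$ either to an edge or to a loop of $K$. Since the $p$ added vertices $v_1,\dots,v_p$ form a clique, any $p$-coloring $c$ of $\Glist$ assigns them the $p$ distinct colors, so $c$ induces a permutation $\varphi_c$ of $\{1,\dots,p\}$ by $\varphi_c(i):=c(v_i)$. I would set
\[
	f(c)(v) := \varphi_c^{-1}(c(v)) \quad\text{for } v\in V(G),
\]
that is, $f(c)(v)$ is the unique index $i$ with $c(v)=c(v_i)$. The first thing to verify is that $f(c)$ is a vertex of $K$, i.e.\ a list coloring of $G$: properness of $f(c)$ on $G$ is inherited from $c$, and since $v$ and $v_{f(c)(v)}$ carry the same color under the proper coloring $c$ they are non-adjacent in $\Glist$, which by the construction of $\Glist$ means exactly that $f(c)(v)\in L(v)$.

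To see that $f$ is a homomorphism it suffices to treat a single Kempe-exchange $c_1 \keq c_2$, say on a connected component $D$ of $\Glist(a,b)$. Because the added vertices form a clique, precisely two of them receive colors in $\{a,b\}$, namely $v_\alpha$ and $v_\beta$ with $\alpha:=\varphi_{c_1}^{-1}(a)$ and $\beta:=\varphi_{c_1}^{-1}(b)$, and they are adjacent, hence always lie in a common component. This splits the argument into two cases. If $D$ contains no clique vertex, then $\varphi_{c_1}=\varphi_{c_2}$ and $D$ is simultaneously a connected component of $G(a,b)$; pulling the switch of $D$ back through the fixed permutation turns it into the switch of the single component $D$ of $G(\alpha,\beta)$, so $f(c_1)\keq f(c_2)$ and we land on an edge of the first kind of $K$.

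The remaining case, in which $D$ contains $v_\alpha$ and $v_\beta$, is where I expect the real work to be. Switching $D$ also transposes the colors of $v_\alpha$ and $v_\beta$, so $\varphi_{c_2}=(a\ b)\circ\varphi_{c_1}$; carrying this extra transposition through the decoding shows that $f(c_2)$ is obtained from $f(c_1)$ by switching the colors $\alpha$ and $\beta$ on every component of $G(\alpha,\beta)$ \emph{except} on the trace $D\cap V(G)$ of the switched component. Matching this against the second adjacency condition of $K$ is the crux: one must show that $D\cap V(G)$ plays the role of the single excepted component. The subtlety --- and the main obstacle --- is that $D\cap V(G)$ need \emph{not} be connected inside $G(\alpha,\beta)$, since the clique vertices $v_\alpha,v_\beta$ may glue together several of its components through the availability edges; so the verification that this move is nevertheless admissible in $K$ (and collapses to a loop exactly when nothing outside $D$ survives, which is what the self-loops of $K$ are for) is precisely the delicate point to get right.

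Concretely, I would isolate a small lemma describing how adjoining the clique together with the edges $v\,v_i$ for $i\notin L(v)$ reconnects the bipartite graphs $G(\alpha,\beta)$: it is exactly the components that attach to $v_\alpha$ or $v_\beta$ that get absorbed into $D$, and these are the ones left fixed by $f$. Everything else --- well-definedness of $f$, the clique-free case, and the bookkeeping of the transposition $(a\ b)$ --- is routine once this component correspondence is settled. With that correspondence in hand, every edge of $\KKG{p}{\Glist}$ maps to an edge or a loop of $K$, so $f$ is the desired graph homomorphism.
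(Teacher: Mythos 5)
Your construction is essentially the paper's proof: the paper normalizes $c$ by Kempe-exchanging entire colour-class pairs of \Glist until $v_i$ carries colour $i$ and then restricts to $V(G)$, which is precisely your decoding $v\mapsto\varphi_c^{-1}(c(v))$, and the paper uses the same case split (exchange avoiding the clique vertices versus hitting two of them, with the sub-case ``$\Glist(a,b)$ connected'' absorbed by the self-loops). The ``delicate point'' you flag --- that the trace $D\cap V(G)$ of the switched component of $\Glist(a,b)$ may decompose into several components of $G(\alpha,\beta)$, so that the move is not literally ``all except a \emph{single} connected component of $G(i,j)$'' --- is a genuine observation, but it exposes an imprecision in the stated definition of $K$ rather than a defect of your map: the paper's own case~2 silently reads that clause as ``all except a single Kempe-component of $\Glist(c(u),c(v))$''. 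Since $K$ is an auxiliary graph of the authors' own making, the repair is simply to define its shortcut edges with respect to components of $\Glist(i,j)$ (equivalently, to except exactly the union of $G(\alpha,\beta)$-components glued to $v_\alpha$ or $v_\beta$ by availability edges, which is the content of the small lemma you propose); the preceding proposition that $K$ is connected iff $\KKG{p}{G}[C_\pi]$ is connected is unaffected, because these edges remain shortcuts for sequences of Kempe-exchanges on $G$. With that reading your argument closes; do also note the degenerate sub-case $D=\{v_\alpha,v_\beta\}$, where \emph{all} of $G(\alpha,\beta)$ gets swapped and the same liberal reading (or a chain of first-clause edges) is needed.
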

\begin{proof}
We construct the mapping $f: V(\KKG{p}{\Glist}) \rightarrow V(K)$. Let $c \in
V(\KKG{p}{\Glist})$. First, we swap the colors $v_1,\ldots,v_p$ such that
$v_i$ has color $i$ for each $i \in \{1,\ldots,p\}$. This can be achieved by
applying the following sequence of Kempe-exchanges to the coloring $c$: For each color $j
\in \{1,\ldots,p\}$, if the current color of $v_j$ is $i \neq j$, swap the
colors in each Kempe-component of $\Glist(i,j)$. Let $c'$ be the resulting coloring.  Except for the vertices
$v_1,\ldots,v_p$ and their incident edges, $\Glist$ is just a copy of $V(G)$. Now,
pick $f(c) = \tilde c$, where $\tilde c$ is equivalent to $c'$ restricted to
the vertices $V(G) \subseteq V(\Glist)$. Clearly, $\tilde c$ is a proper
coloring of $\KKG{p}{G}$.  Due to the construction of $\Glist$, $\tilde c$
satisfies the list coloring requirements for $G$, i.e., for each $v \in V(G)$
we have $c(v) \in \alpha(v)$.  Therefore, $\tilde c \in V(K) =
V(\KKG{p}{G}[C_\pi])$.

We show that the mapping $f$ is a graph homomorphism as required. Let $c$, $d$
be colorings of $\Glist$ such that $c\uline{}d$ in $\KKG{p}{\Glist}$. Further, let \kx
be a witness of $c \keq d$. There are two cases to consider: 
\begin{enumerate}
	\item The Kempe-exchange $\kx$ does not involve any of the nodes
		$v_1,\ldots,v_p$. Then $f$ renames the color classes of the colorings
		$c$ and $d$ if required and there is a Kempe-exchange corresponding to
		$\kx$ that establishes $f(c)\uline{} f(d)$ in $K$.\label{lemma:gh:itm1}
	\item The Kempe-exchange \kx involves two nodes $u, v
		\in \{v_1,\ldots,v_p\}$. We need to consider the following two subcases. If
		$\Glist(c(u),c(v))$ is connected then $f(c) = f(d)$ and therefore,
		$f(c)\uline{}f(d)$, since each node of $K$ has a self-loop. Otherwise,
		$f(c)$ and $f(d)$ differ with respect to the color classes $c(u)$ and
		$c(v)$. We show that $f(c)$ and $f(d)$ are connected by a sequence of
		Kempe-exchanges that swaps the colors in all except a single
		Kempe-component of $\Glist(c(u),c(v))$ and thus $f(c)\uline{}f(d)$ by the
		construction of $K$ above. To obtain $f(d)$, we first apply \kx to $c$
		on $\Glist$ and then apply $f$ to the resulting coloring. The
		Kempe-exchange \kx swaps the colors of the connected component of
		$\Glist(c(u),c(v))$ containing $u$ and $v$, and then $f$ swaps the colors
		in $\Glist(c(u), c(v))$. As a result, $f(d)$ can be obtained from $f(c)$ by
		swapping the colors in $\Glist(c(u),c(v))$ except the one containing $u$
		and $v$ in the preimage $f^{-1}(V(G(c(u),c(v))))$.\label{lemma:gh:itm2} 
\end{enumerate}

In summary, for all $c,d \in V(\KKG{p}{\Glist}): c\uline{}d$ implies
$f(c)\uline{}f(d)$.
\end{proof}

The graph homomorphism $f$ induces the equivalence relation \feq on
$V(\KKG{p}{\Glist})$, that is, for $a,b \in V(\KKG{p}{\Glist}): a \eq_f b$ if $f(a) = f(b)$.

\begin{theorem}
$\KKG{p}{G}[C_\pi]$ is connected if and only if $\KKG{p}{\Glist}$ is connected.
\end{theorem}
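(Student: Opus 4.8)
The plan is to route everything through the two preceding results: the proposition relating $K$ and $\KKG{p}{G}[C_\pi]$ already gives that $K$ is connected if and only if $\KKG{p}{G}[C_\pi]$ is connected, so it suffices to show that $K$ is connected if and only if $\KKG{p}{\Glist}$ is connected, and to transport connectivity across the graph homomorphism $f\colon\KKG{p}{\Glist}\rightarrow K$ of Proposition~\ref{lemma:gh}. Before handling the two directions I would record two structural facts about $f$. First, $f$ is \emph{surjective}: for any $\tilde c\in V(K)$, the coloring of $\Glist$ that agrees with $\tilde c$ on $V(G)$ and sets each clique vertex $v_\ell$ to color $\ell$ is a proper $p$-coloring respecting the list constraints built into $\Glist$, and $f$ fixes it; I call such a preimage \emph{canonical}. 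Second, every fiber $f^{-1}(\tilde c)$ is connected in $\KKG{p}{\Glist}$: the specific sequence of Kempe-exchanges used to define $f$ drives any $c$ with $f(c)=\tilde c$ to the \emph{unique} canonical representative of $\tilde c$, and each intermediate coloring on that sequence again lies in $f^{-1}(\tilde c)$, so the whole fiber is star-connected to the canonical representative.

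The easy direction ($\KKG{p}{\Glist}$ connected $\Rightarrow$ $K$ connected) is then immediate: a homomorphism sends walks to walks, so for $\tilde c,\tilde d\in V(K)$ I take canonical preimages and map a connecting $\KKG{p}{\Glist}$-path to a $\tilde c$--$\tilde d$ walk in $K$, the clique-internal Kempe-exchanges being absorbed harmlessly by the self-loops of $K$.

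The substance is the converse. Assuming $K$ connected, I take arbitrary colorings $c_1,c_2$ of $\Glist$, canonicalize them (fiber connectivity) to $c_1',c_2'$ whose $G$-restrictions $\tilde c_1,\tilde c_2$ are vertices of $K$, pick a $K$-path $\tilde c_1=w_0\uline{}w_1\uline{}\cdots\uline{}w_m=\tilde c_2$, and lift it edge-by-edge to a $\KKG{p}{\Glist}$-path through the canonical colorings $w_0',\dots,w_m'$. The observation that makes each edge liftable is that, in a canonical coloring, a bichromatic component of $G(i,j)$ is \emph{swappable} (recoloring it keeps a valid list coloring) exactly when none of its vertices is adjacent to $v_i$ or $v_j$ in $\Glist$, i.e.\ when the swap avoids the clique. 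A Kempe-exchange edge of $K$ swaps a single such component, which is therefore swappable and lifts to a single clique-avoiding Kempe-exchange $w_\ell'\keq w_{\ell+1}'$. A shortcut edge swaps all but one component of $G(i,j)$; since both its endpoints are valid list colorings, every swapped component is swappable, so I realize the shortcut as a sequence of single clique-avoiding Kempe-exchanges, one per swapped component, all remaining canonical. Concatenating the canonicalizations with these lifted edges connects $c_1$ to $c_2$.

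The main obstacle I anticipate is precisely the shortcut edges: a single clique-\emph{involving} Kempe-exchange in $\Glist$ generally projects under $f$ to a simultaneous swap of several $G(i,j)$-components, so one cannot lift a shortcut by one exchange. The resolution above sidesteps this by decomposing each shortcut along its swappable components into individual clique-avoiding exchanges, which is legitimate because distinct $G(i,j)$-components share no edges and each swapped component is swappable, so every partial swap is again a proper, list-valid, and canonical coloring.
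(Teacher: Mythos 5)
Your proof is correct and follows the same overall architecture as the paper's: both directions are routed through the auxiliary graph $K$ and the homomorphism $f$ of Proposition~\ref{lemma:gh}, the forward direction uses preservation of connectedness under a (surjective) homomorphism, and the converse rests on connecting each coloring of \Glist to a normalized representative of its fiber. Where you genuinely differ is in how the converse is completed. The paper argues abstractly: by the first isomorphism theorem $K \cong \KKG{p}{\Glist}_{/\feq}$, so connectedness of $K$ gives connectedness of the quotient, and since any two colorings in a fiber differ by a permutation of color classes realizable by Kempe-exchanges, $\KKG{p}{\Glist}$ itself is connected. You instead lift a $K$-path edge-by-edge through the canonical colorings, using an explicit swappability criterion (a component of $G(i,j)$ may be recolored exactly when none of its vertices is adjacent to $v_i$ or $v_j$ in \Glist, which I verified is correct) and decomposing each shortcut edge into single clique-avoiding exchanges. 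Your version is more constructive and in fact makes explicit a step the paper's isomorphism claim silently relies on, namely that every edge of $K$ --- including the shortcut edges --- is realized by at least a path between the corresponding fibers of $\KKG{p}{\Glist}$. One small inaccuracy: the intermediate colorings of your normalizing sequence need not all lie in $f^{-1}(\tilde c)$, since a partially performed class swap can change the $f$-image; this is harmless, because all you actually use is that each $c$ is joined to the canonical representative of $f(c)$ by \emph{some} path in $\KKG{p}{\Glist}$.
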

\begin{proof}
We noted above that $\KKG{p}{G}[C_\pi]$ is connected if and only if $K$ is
connected. Let $f: \KKG{p}{\Glist} \rightarrow
K$ be the graph homomorphism from Lemma~\ref{lemma:gh}. \newline
``Only if'' part: Let $\KKG{p}{\Glist}$ be connected. Then $K$ is connected since
there is a graph homomorphism $\KKG{p}{\Glist} \rightarrow K$,
and graph homomorphisms preserve connectedness. Therefore,
$\KKG{p}{G}[C_\pi]$ is connected.\newline
``If'' part:  Let $\KKG{p}{G}[C_\pi]$ be connected. Then $K$ is
connected. Due to the first isomorphism theorem, $K \cong
\KKG{p}{\Glist}_{/\feq}$ and thus, $\KKG{p}{\Glist}_{/\feq}$ is also connected. Any two
colorings $u$, $v$ of $\Glist$ such that $u \feq v$ are connected by
Kempe-exchanges since one can be obtained from the other by permuting the
colors of the color classes.
\end{proof}

We show that using the reduction from list to vertex coloring results in a
representation of the search space which has a similar diameter compared to the
actual search space $\KKG{p}{G}[C_\pi]$.
\begin{theorem}
$\diameter(\KKG{p}{G}[C_\pi]) \leq \lfloor\frac{|V(G)|-1}{2}\rfloor\cdot\diameter(\KKG{p}{\Glist})$.
\end{theorem}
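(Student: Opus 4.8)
The plan is to push a geodesic of $\KKG{p}{\Glist}$ down to $\KKG{p}{G}[C_\pi]$ through the homomorphism $f$ of Lemma~\ref{lemma:gh}, and to control how much a single Kempe-exchange on $\Glist$ is inflated when it is simulated on $G$. Concretely, I would fix a pair $c_1, c_2 \in C_\pi$ attaining $\diameter(\KKG{p}{G}[C_\pi])$ and lift them to colorings $\hat c_1, \hat c_2$ of $\Glist$ in which each clique vertex $v_i$ carries color $i$; then $f(\hat c_1) = c_1$ and $f(\hat c_2) = c_2$, so $f$ is onto $V(K) = V(\KKG{p}{G}[C_\pi])$. Choosing a shortest $\hat c_1$--$\hat c_2$ path $\gamma_0, \dots, \gamma_\ell$ in $\KKG{p}{\Glist}$ with $\ell \le \diameter(\KKG{p}{\Glist})$ and applying $f$ produces a walk $f(\gamma_0), \dots, f(\gamma_\ell)$ from $c_1$ to $c_2$ in $K$. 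Since $K$ and $\KKG{p}{G}[C_\pi]$ share a vertex set and every edge of $K$ is either an edge of $\KKG{p}{G}$ or a shortcut standing for a sequence of Kempe-exchanges, it then suffices to bound, for each step $\gamma_t \to \gamma_{t+1}$, the number of Kempe-exchanges needed to realize $f(\gamma_t) \to f(\gamma_{t+1})$ inside $\KKG{p}{G}[C_\pi]$, and to sum over the at most $\diameter(\KKG{p}{\Glist})$ steps.

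To estimate a single step I would invoke the two cases of Lemma~\ref{lemma:gh}. When the witnessing exchange \kx avoids the clique vertices $v_1, \dots, v_p$, the colorings $f(\gamma_t)$ and $f(\gamma_{t+1})$ differ (after the color renaming carried out by $f$) by one Kempe-exchange, so the step costs one. When \kx involves two clique vertices, $f(\gamma_t)$ and $f(\gamma_{t+1})$ differ by swapping two colors $i, j$ on all but one connected component of $G(i,j)$, and I would realize this by recoloring the affected components one at a time. The point to verify is that each such component consists solely of vertices whose list contains both $i$ and $j$: the components meeting the clique in $\Glist$ are exactly those carrying the availability obstructions, and these are precisely the ones left untouched, so every intermediate coloring remains a valid list coloring, i.e., stays in $C_\pi$. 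The cost of the step is thus the number of components of $G(i,j)$ that get swapped.

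The crux is the quantitative bound $\lfloor (|V(G)|-1)/2\rfloor$. A direct count gives only $|V(G)|-1$: the swapped components are vertex-disjoint and exactly one component of $G(i,j)$ is excluded, so their number is at most $|V(G)|-1$. Gaining the extra factor $\tfrac12$ is the delicate step and is where I expect the real work to lie. The natural hope is that every component which genuinely requires a Kempe-exchange spans both color classes and hence has at least two vertices, forcing the count down to $\lfloor(|V(G)|-1)/2\rfloor$; the obstruction to this is an isolated vertex of a single color, whose recoloring is an elementary recoloring costing a step on its own. I would therefore concentrate either on showing that such singletons can be absorbed by a more economical realization of the swap---recoloring whichever color class contributes fewer components---or on an amortized argument over the whole geodesic in which the expensive de-normalizing steps are charged against the length $\ell$. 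Throughout, the subtle requirement is that each intermediate coloring must respect the availability lists and stay in $C_\pi$, since a wholesale relabeling of a color class need not do so; once a per-step bound of $\lfloor(|V(G)|-1)/2\rfloor$ is secured, multiplying by $\ell \le \diameter(\KKG{p}{\Glist})$ yields the claimed inequality.
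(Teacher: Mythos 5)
Your overall strategy---lift a geodesic of $\KKG{p}{\Glist}$, push it through the homomorphism $f$ of Lemma~\ref{lemma:gh}, and bound the number of Kempe-exchanges on $G$ needed to simulate each single edge---is exactly the route the paper takes, and your treatment of the two cases (an exchange avoiding the clique costs one step; an exchange through two clique vertices costs one step per swapped component of $G(i,j)$, each of which stays inside $C_\pi$ because any vertex adjacent in $\Glist$ to the clique vertex carrying the other colour is absorbed into the unswapped component) is sound. The problem is that you stop at precisely the place where a proof is still owed: the per-edge bound of $\lfloor(|V(G)|-1)/2\rfloor$. You candidly derive only $|V(G)|-1$ (the swapped components are vertex-disjoint and at least one component is excluded), observe that the missing factor of two would follow if every swapped component had at least two vertices, note that singleton components defeat this, and then list two possible repairs without carrying either one out. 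As submitted, the argument establishes only $\diameter(\KKG{p}{G}[C_\pi]) \leq (|V(G)|-1)\cdot\diameter(\KKG{p}{\Glist})$, not the stated inequality.

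Your worry is not a technicality. A swapped component really can be a single vertex: any $w$ of colour $i$ with no $G$-neighbour of colour $j$ and with $j$ in its availability list is an isolated vertex of $\Glist(i,j)$ lying outside the component that contains the two clique vertices, and recolouring it costs a full Kempe-exchange of its own. If $G$ contains many such vertices, a single exchange of the clique component in $\Glist$ forces on the order of $|V(G)|$ exchanges on $G$, so no pointwise per-edge bound of $\lfloor(|V(G)|-1)/2\rfloor$ can hold; any fix must either be amortized over the whole path (your second suggestion) or exploit slack in $\diameter(\KKG{p}{\Glist})$ itself. For comparison, the paper's own proof simply asserts that ``there are at most $\lfloor(|V(G)|-1)/2\rfloor$ components'' without addressing singletons, so you have in fact put your finger on the weakest point of the published argument rather than overlooked a standard trick---but a located gap is still a gap, and the proposal remains incomplete until one of your two repair strategies is actually executed.
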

\begin{proof}
  First, note that graph homomorphisms preserve connectedness. Thus, if
  $\KKG{p}{\Glist}$ is connected so is $\KKG{p}{G}[C_\pi]$.  Now, for any adjacent
  nodes $c, d \in \KKG{p}{\Glist}$, we count how many Kempe-exchanges are required
  to get from $f(c)$ to $f(d)$ in $\KKG{p}{G}[C_\pi]$. Let \kx be the
  Kempe-exchange that is a witness of $c\uline{}d$, and let $i$ and $j$ be the
  involved color classes. If $c \feq d$ then, in the worst case, all except one
  connected component of $G(i,j)$ need to be switched to get from $c$ to $d$
  for the reasons stated in cases~\ref{lemma:gh:itm1} and~\ref{lemma:gh:itm2}
  in the proof of Lemma~\ref{lemma:gh}. There are at most $\lfloor (|V(G)|-1)/2
  \rfloor$ components and at most one Kempe-exchange is required for each of
  them. If $c \fneq d$ then there is a single Kempe-exchange on $G$ that
  establishes $f(c)\uline{}f(d)$.  Thus, a shortest path of maximum length $t$
  in $\KKG{p}{\Glist}$ corresponds to a path of length at most
  $t\cdot\lfloor(|V(G)|-1)/2\rfloor$ in $\KKG{p}{G}[C_\pi]$.
\end{proof}

\subsection{The connectedness of $\KKG{p}{\Glist}$}

Given two colorings $c$ and $c'$ of $\Glist$, the algorithm \kemperecolor
transforms $c$ into $c'$ as long as there is a sufficient number of colors
available. Due to the reduction however, $\Glist$ contains a clique on the
vertices $v_1,\ldots,v_p$, which implies that $\degeneracy(\Glist) \geq p-1$.
Therefore, according to Corollary~\ref{cor:kkg_connectedness}, the clash-free
timetables which satisfy timeslot availability requirements are connected if 
$p > \degeneracy(\Glist) \geq p-1$, that is, $\degeneracy(\Glist) = p-1$. In order to obtain less strict conditions for the
connectedness we fix the colors of the clique vertices $v_1,\ldots,v_p$ of
$\Glist$, and possibly other vertices. As a consequence, we exclude the clique
from the recoloring process, so the number of colors required by \kemperecolor is
no longer dominated by the clique.

We will first consider the general case, where the colors of some vertices $F
\subseteq V(G)$ are assumed to be fixed. We denote by $\overline{F} = V(G)
\setminus F$ be the remaining vertices. Further, let $S' \subset S(G)$ be the
vertex orderings satisfying
\begin{equation}
\label{eq:orderingcondition}
  \forall u\uline{}v \uline{}w,\; u, v \in \overline{F},\; w \in F : \; u < v  \Rightarrow w < v\enspace.
\end{equation}
That is, if $v$ is a successor of $u$ and they are adjacent, then all neighbors of
$v$ in $F$ must precede $v$. Fig.~\ref{fig:orderingcondition} shows two examples
of vertex orderings of the graph $u\uline{}v\uline{}w$. For $F = \{ w
\}$, the ordering shown in Fig.~\ref{fig:orderingcondition:positive} satisfies
the condition in Eq.~\ref{eq:orderingcondition} and the one shown in Fig.~\ref{fig:orderingcondition:negative}
does not. We will prove next that \kemperecolor does not change the color of
any vertex in $F$ if the vertices $V(G)$ are processed according to an ordering
in $S'$. In order to bound the number of colors required for our analysis, we
introduce the following generalization of the degeneracy of a graph:

\begin{definition}[Subdegeneracy]
Let $G$ be a graph and let $F \subseteq V(G)$. The \emph{subdegeneracy}
$\subdegeneracy(F, G)$ of $G$ relative to $F$ is defined as:
\begin{equation}
  \subdegeneracy(F, G) := \min_{\ordering \in S'} \max_{v \in V(G) \setminus F} \,\operatorname{pred}(v, \ordering)
  \label{eq:subdegeneracy}
\end{equation}
\end{definition}
Note that $\subdegeneracy(F, G) = \degeneracy(G)$ if $F$ is empty and
$\subdegeneracy(F, G) \leq \degeneracy(G)$ otherwise. Intuitively, we are
looking for a vertex ordering in $\orderings'$ that minimizes the maximum
number of adjacent predecessors of any vertex, however, the number of
predecessors of any vertex in $F$ is irrelevant.

\begin{remark}
  The ordering constraints in Eq.~\eqref{eq:orderingcondition} are reminiscent
  of the NP-complete problem~\cite[MS1 and MS2]{GJ:79}. However, an ordering $\ordering \in \orderings'$
  can be found in polynomial time (if one exists) by a reduction to
  $\textsc{2SAT}$: The reduction adds for each implication in
  Eq.~\eqref{eq:orderingcondition} an appropriate $\textsc{2SAT}$ clause\footnote{We would like to thank Alexander Ra\ss\ for this observation.}.
  However, the complexity of determining the subdegeneracy, i.e., the value of
  the min-max expression in Eq.~\eqref{eq:subdegeneracy}, is an open problem.
\end{remark}

\begin{figure}
	\centering\subfloat[Ordering compatible with Eq.~\eqref{eq:orderingcondition}\label{fig:orderingcondition:positive}]{
		\begin{tikzpicture}
			\node[] (U)	{$u$};
			\node[right of=U]	(comp1) {${}<{}$};
			\node[right of=comp1] (W)	{$w$};
			\node[right of=W]	(comp2) {${}<{}$};
			\node[right of=comp2] (V)	{$v$}
				edge [bend right=55] (W)
				edge [bend right=55] (U);
		\end{tikzpicture}
	}\qquad
	\subfloat[Ordering incompatible with Eq.~\eqref{eq:orderingcondition}\label{fig:orderingcondition:negative}]{
		\begin{tikzpicture}
			\node[] (U)	{$u$};
			\node[right of=U]	(comp1) {${}<{}$};
			\node[right of=comp1] (V)	{$v$}
				edge [bend right=55] (U);
			\node[right of=V]	(comp2) {${}<{}$};
			\node[right of=comp2] (W)	{$w$}
				edge [bend right=55] (V);
		\end{tikzpicture}
	}
	\caption{Two vertex orderings of the graph $u\protect\uline{}v\protect\uline{}w$, $F = \{w\}$.\label{fig:orderingcondition}}
\end{figure}
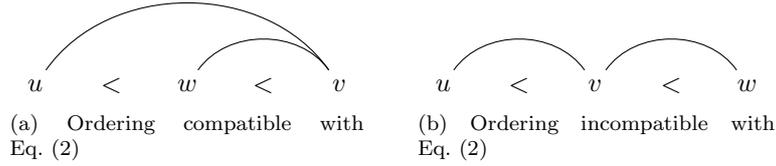

\begin{theorem}
\label{thm:subdegeneracy_connectedness}
Let $c$, $c'$ be $k$-colorings of $G$ that agree on $F$. Then
\kemperecolor returns a sequence of Kempe-exchanges such that 
\begin{enumerate}
	\item all intermediate colorings also agree on $F$, and
	\item no more than $\subdegeneracy(F, G)+1$ colors are required.
\end{enumerate}
\end{theorem}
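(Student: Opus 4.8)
The plan is to analyse Algorithm~\ref{alg:kemperecolor} directly, extending the argument behind Theorem~\ref{thm:kkg_connectedness} from~\cite{VM:81,Mohar:07} to the fixed-$F$ setting. I fix an ordering $\ordering \in \orderings'$ that attains the subdegeneracy, so that $\operatorname{pred}(v, \ordering) \le \subdegeneracy(F, G)$ for every $v \in \overline{F}$, and label the vertices $v_1, \dots, v_n$ accordingly. The proof runs by induction on the outer-loop index $i$, maintaining the invariant that after iteration $i$ the list $K$ applied to $c$ produces only proper colorings which agree with $c$ (hence with $c'$) on $F$, agree with $c'$ on $\{v_1,\dots,v_i\}$, and are reached without ever assigning a color outside $\{1,\dots,\subdegeneracy(F,G)+1\}$ on the inserted detours. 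That the final coloring indeed reaches $c'$ is already supplied by the analysis in~\cite{VM:81,Mohar:07}; the genuinely new content is the $F$-agreement of the first assertion and the sharpened palette bound of the second.

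For the first assertion the key structural observation is that lines~\ref{alg:kemperecolor:one} and~\ref{alg:kemperecolor:two} force the current vertex $v_i$ to have degree at most one in the bichromatic subgraph $H(a,b)$, where $H = G[v_1,\dots,v_i]$, of every replayed Kempe-exchange $\kx = (a,b,u)$: if $v_i$ carries color $b$ and has at least two $a$-coloured neighbours it is recoloured to a fresh color $b' \notin \{a,b\}$ and so leaves $H(a,b)$ altogether, and otherwise it is at most a pendant. Hence a stored exchange, when replayed on the enlarged graph $H$, can only pick up $v_i$ as a leaf of its component, never as an internal conduit, so components expand in a controlled, one-vertex-at-a-time manner and always towards vertices processed after their centre. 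I would then show by induction that no such component ever meets $F$: assuming a component centred at some $v_j \in \overline{F}$ first reaches $w \in F$, I would isolate the frontier edge $x\uline{}w$ with $x \in \overline{F}$ and feed the triple $(\cdot, x, w)$ into Eq.~\eqref{eq:orderingcondition}, deriving an ordering relation incompatible with $w$ having been attached to the component after $x$. Granting this, the insertion on line~\ref{alg:kemperecolor:two} is triggered only for $v_i \in \overline{F}$, the final exchange appended on line~\ref{alg:kemperecolor:final} for $v_i \in F$ is vacuous because $c(v_i) = c'(v_i)$ and $v_i$ was never disturbed, and therefore every intermediate coloring agrees with $c$ on $F$.

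The second assertion is then a short count. A fresh color is requested only on line~\ref{alg:kemperecolor:two}, hence only for $v_i \in \overline{F}$, for which $\operatorname{pred}(v_i, \ordering) \le \subdegeneracy(F, G)$. The guard of that line guarantees that at least two neighbours of $v_i$ carry the same color $a$, so the neighbours of $v_i$ in $H$ realise at most $\operatorname{pred}(v_i, \ordering) - 1$ distinct colors; adjoining the current color $b$ of $v_i$, which is distinct from all of them since the coloring is proper, forbids at most $\subdegeneracy(F, G)$ colors in total. A usable $b'$ therefore exists within a palette of size $\subdegeneracy(F, G) + 1$.

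I expect the main obstacle to be the component-tracking behind the first assertion: one has to make precise how the Kempe-component of a previously recorded exchange evolves as later vertices enter $H$, confirm that the degree-at-most-one property really does reduce growth to leaf attachments, and check that Eq.~\eqref{eq:orderingcondition} blocks every alternating path at the frontier of $F$ --- in particular ruling out that the conservative test on line~\ref{alg:kemperecolor:two} ever fires on an $F$-vertex, and handling the degenerate case in which the offending $F$-vertex is adjacent to the centre of the component itself. By contrast the palette bound and the inheritance of correctness from Theorem~\ref{thm:kkg_connectedness} are routine.
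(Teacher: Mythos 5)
Your overall strategy is the paper's: run \kemperecolor on an ordering from $S'$ attaining the subdegeneracy, argue that no recoloring ever crosses the $\overline{F}$-to-$F$ frontier, and bound the palette by observing that in case~\ref{alg:kemperecolor:two} two preceding neighbours share a colour, so at most $\subdegeneracy(F,G)$ colours are forbidden for $b'$. Your second assertion is proved word-for-word as in the paper, and your frontier-edge argument --- a configuration $y\uline{}x\uline{}w$ with $y,x\in\overline{F}$, $y<x<w$ and $w\in F$ contradicting Eq.~\eqref{eq:orderingcondition} --- is exactly the paper's second case for the first assertion.

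The genuine gap is the case you flag and defer: the offending $F$-vertex $w$ adjacent to the \emph{centre} $u$ of a recorded exchange. You write that you would "check that Eq.~\eqref{eq:orderingcondition} blocks every alternating path at the frontier of $F$", but the ordering condition cannot block this one. It only constrains configurations $y\uline{}u\uline{}w$ with \emph{both} $y,u\in\overline{F}$ and $y<u$; for an exchange appended on line~\ref{alg:kemperecolor:final} the centre $u$ need not have any preceding $\overline{F}$-neighbour, so the condition says nothing about the single edge $u\uline{}w$, and your component-growth induction has no base case at the centre. The missing idea is that this case is excluded by the hypothesis on $c'$ rather than by the ordering: a line-\ref{alg:kemperecolor:final} exchange recolours $u$ to $c'(u)$, and since $c'$ is a proper colouring agreeing with $c$ on $F$, we get $c'(u)\neq c'(w)=c(w)$, so $w$ (still carrying $c(w)$) never lies in the switched Kempe-component of $u$. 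This is precisely the paper's first case ("either $u$ was recoloured to $c(w)$ because $c'(u)=c(w)$, but then $c'(w)\neq c(w)$, a contradiction"); only when $u$ was instead recoloured in case~\ref{alg:kemperecolor:one} or~\ref{alg:kemperecolor:two} because of a preceding neighbour does the ordering condition take over, which is the part you do have. Supply that properness-plus-agreement argument and the rest of your plan coincides with the paper's proof.
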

\begin{proof}
We first show that the colors of the vertices $F$ are not changed by \kemperecolor.
Assume for a contradiction that in some intermediate coloring a vertex $w \in
F$ has a color different from $c(w)$. Then $w$ has been recolored because a
neighbor $u$ of $w$ preceding it in $\ordering$ received color $c(w)$. There
are two possible reasons: Either $u$ was recolored to $c(w)$ because $c'(u) =
c(w)$, but then $c'(w) \neq c(w)$, a contradiction. If this is not the case,
then $u$ was recolored in case~\ref{alg:kemperecolor:one}
or~\ref{alg:kemperecolor:two} of \kemperecolor, because of a neighbor $v$
preceding it. But this is a contradiction to $\ordering \in S'$.

We now show that $\subdegeneracy(F, G)+1$ colors are sufficient. Since the
vertices in $F$ are never recolored, we consider only the vertices
$\overline{F}$. An unused color may be picked for a vertex $v \in \overline{F}$
in case~\ref{alg:kemperecolor:two} of Algorithm~\ref{alg:kemperecolor}. For
each $v \in \overline{F}$, there are at most $\subdegeneracy(F, G)$ neighbors
of $v$ preceding it, and there are at most $\subdegeneracy(F, G)-1$ colors
different from the color of $v$ present among these vertices. Thus, there is at
least one other color available for $v$.
\end{proof}

Recall that for any two clash-free timetables we can assume the colors of the
clique vertices $v_1,\ldots,v_p$ of $\Glist$ to be fixed. If we pick $F \subseteq
V(\Glist)$ such that any two colorings of $\Glist$ which agree on the clique also agree
on $F$, then we obtain the following:

\begin{corollary}
\label{cor:subdegeneracy_connectedness}
The clash-free timetables that satisfy timeslot availability requirements are
connected if $|\periods| > \subdegeneracy(F, \Glist)$.
\end{corollary}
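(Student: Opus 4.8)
The plan is to reduce the corollary to the connectedness of $\KKG{p}{\Glist}$ and then invoke Theorem~\ref{thm:subdegeneracy_connectedness}. By the equivalence established above, the clash-free timetables that satisfy the timeslot availability requirements (i.e.\ the nodes $C_\pi$) are connected exactly when $\KKG{p}{\Glist}$ is connected. Hence it suffices to prove that $\KKG{p}{\Glist}$ is connected whenever $|\periods| > \subdegeneracy(F, \Glist)$, where $F$ is chosen so that any two colorings of $\Glist$ agreeing on the clique $\{v_1,\ldots,v_p\}$ also agree on $F$.

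To this end I would take two arbitrary $p$-colorings $c_1,c_2$ of $\Glist$ and first \emph{normalize} them on the clique. Permuting two entire color classes $i$ and $j$ amounts to performing one Kempe-exchange on each connected component of $\Glist(i,j)$ in turn, so any permutation of the colors can be realized by a sequence of Kempe-exchanges; this is precisely the argument used in the ``if'' part of the equivalence theorem. Applying such permutations to $c_1$ and $c_2$ yields colorings $c_1'$ and $c_2'$ in which $v_i$ has color $i$ for every $i$, and connects $c_j$ to $c_j'$ within $\KKG{p}{\Glist}$. Now $c_1'$ and $c_2'$ agree on the clique, so by the defining property of $F$ they also agree on $F$.

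With $c_1'$ and $c_2'$ agreeing on $F$, I would apply Theorem~\ref{thm:subdegeneracy_connectedness} to the graph $\Glist$ and the set $F$, processing the vertices in an ordering $\ordering \in \orderings'$. The theorem returns a sequence of Kempe-exchanges transforming $c_1'$ into $c_2'$ that keeps all intermediate colorings fixed on $F$ and uses no more than $\subdegeneracy(F, \Glist)+1$ colors. Since $|\periods| = p > \subdegeneracy(F, \Glist)$, i.e.\ $p \geq \subdegeneracy(F, \Glist)+1$, every one of these exchanges is a legal move in $\KKG{p}{\Glist}$. Concatenating the three segments $c_1 \to c_1' \to c_2' \to c_2$ shows that $c_1$ and $c_2$ are connected; as they were arbitrary, $\KKG{p}{\Glist}$ is connected, and the corollary follows from the equivalence.

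The hard part will be the normalization step rather than the final count of colors. I must be certain that forcing the clique to its canonical coloring is achieved by genuine Kempe-exchanges and, more delicately, that the normalized colorings then agree on \emph{all} of $F$ and not merely on the clique itself, which is exactly where the defining property of $F$ is used. A secondary point to verify is that $\orderings'$ is nonempty, so that $\subdegeneracy(F, \Glist)$ is well defined and Theorem~\ref{thm:subdegeneracy_connectedness} is actually applicable; once these two points are in place the remainder is routine bookkeeping.
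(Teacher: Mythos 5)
Your proposal is correct and follows essentially the same route the paper intends: normalize the clique colors by Kempe-exchanges, use the defining property of $F$ to extend agreement from the clique to all of $F$, and then invoke Theorem~\ref{thm:subdegeneracy_connectedness} together with the equivalence between the connectedness of $\KKG{p}{\Glist}$ and that of $\KKG{p}{G}[C_\pi]$. The two caveats you flag (that the normalization is realized by genuine Kempe-exchanges, and that $\orderings'$ is nonempty, which holds since any ordering placing $F$ first satisfies condition~\eqref{eq:orderingcondition}) are exactly the points the paper relies on implicitly.
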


We return to the general setting and propose a heuristic approach to finding a
witness vertex ordering of $\subdegeneracy(F, G)$ for any graph $G$ and $F
\subseteq V(G)$. Let $\tilde S \subseteq S'$ be the vertex orderings such that
the vertices $F$ precede all other vertices.  Recall that for any graph $G$ a
witness vertex ordering of the degeneracy $\degeneracy(G)$ can be found by
repeatedly removing vertices of minimal degree. In a similar fashion, we can
determine the value
\[
	\lambda(F, G) := \min_{\ordering \in \tilde \orderings} \max_{v \in V(G) \setminus F} \operatorname{pred}(v, \ordering)\enspace.
\]
Moreover, this value is equivalently characterized by a max-min expression,
analogous to the two characterizations of the degeneracy shown in
Eq.~\eqref{eq:degeneracy}:

\begin{algorithm}
\caption{\vertexelimination}
\label{alg:vertexelimination}

\DontPrintSemicolon
\SetKwInOut{Input}{input}
\SetKwInOut{InOut}{in/out}
\SetKwInOut{Output}{output}
\SetKwInOut{Data}{data}

\SetKwData{GG}{$G$}
\SetKwData{CC}{$F$}
\SetKwData{DD}{$\overline{F}$}
\SetKwData{HH}{$H$}
\SetKw{KwTo}{downto}%

\Input{graph \GG, vertices $\CC \subseteq V(G)$}
\Output{ordering $v_1,\ldots,v_{|\DD|}$ of the vertices $\DD = V(G) \setminus \CC$}
\BlankLine
$\GG_{|D|} \longleftarrow \GG$ \;
\For{$i \longleftarrow |\DD|$ \KwTo $1$}
{
	choose $v_i$ from $\argmin_{v \in \DD} \{ \vdegree(v, \GG_i) \}$ \;
	$\GG_{i-1} \longleftarrow \GG_i - v_i$.
}
\Return $v_1,\ldots,v_{|\DD|}$\;
\end{algorithm}

\begin{theorem}
For any graph $G$ and $F \subseteq V(G)$,
\[
	\lambda(F, G) = \min_{\ordering \in \tilde \orderings} \max_{v \in V(G) \setminus F} \operatorname{pred}(v, \ordering) = \max_{G[F] \subseteq H \subseteq G} \min_{v \in V(H)\setminus F} \{d_{H}(v)\}\enspace.
\]
Furthermore, \vertexelimination produces a witness vertex ordering of $\lambda(F, G)$.
\end{theorem}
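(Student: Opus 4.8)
The plan is to reproduce the classical two-sided argument behind the degeneracy identity~\eqref{eq:degeneracy}, writing $\lambda(F,G)$ as the min-max value and setting $M := \max_{G[F] \subseteq H \subseteq G} \min_{v \in V(H)\setminus F} d_{H}(v)$ for the max-min value, then proving $\lambda(F,G) \ge M$ and $\lambda(F,G) \le M$ separately, with the second inequality supplied by \vertexelimination. First I would fix the algorithmic bookkeeping. Running \vertexelimination returns an ordering $v_1,\ldots,v_{|\overline F|}$ of $\overline F$; prepending the vertices of $F$ in any order yields a full ordering $\ordering^\ast$. Because every vertex of $F$ precedes every vertex of $\overline F$, we have $\ordering^\ast \in \tilde\orderings$ (indeed $\tilde\orderings \subseteq \orderings'$, since the premise $u<v$ of Eq.~\eqref{eq:orderingcondition} is irrelevant once $w\in F$ always precedes $v\in\overline F$). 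Writing $G_i = G[F \cup \{v_1,\ldots,v_i\}]$ for the graph present when $v_i$ is deleted, the neighbors of $v_i$ in $G_i$ are precisely its neighbors among $F \cup \{v_1,\ldots,v_{i-1}\}$, all of which precede $v_i$ in $\ordering^\ast$; hence $\operatorname{pred}(v_i,\ordering^\ast) = d_{G_i}(v_i)$.

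For the inequality $\lambda(F,G) \ge M$, I would take an arbitrary $\ordering \in \tilde\orderings$ together with a subgraph $H$, $G[F] \subseteq H \subseteq G$ and $V(H)\setminus F \neq \emptyset$, attaining $M$. Let $v^\ast$ be the last vertex of $V(H)\setminus F$ in the order $\ordering$. Every $H$-neighbor of $v^\ast$ that lies in $V(H)\setminus F \subseteq \overline F$ precedes $v^\ast$ by maximality, and every $H$-neighbor of $v^\ast$ in $F$ precedes it because $\ordering \in \tilde\orderings$; so all $d_H(v^\ast)$ neighbors precede $v^\ast$, giving $\operatorname{pred}(v^\ast,\ordering) \ge d_H(v^\ast) \ge M$. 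Since $v^\ast \in \overline F$, taking the maximum over $\overline F$ and then the minimum over $\ordering$ yields $\lambda(F,G) \ge M$.

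For the reverse inequality I would use $\ordering^\ast$ directly. The selection rule of \vertexelimination makes $v_i$ a minimum-degree vertex of $\overline F$ inside $G_i$, so $\operatorname{pred}(v_i,\ordering^\ast) = d_{G_i}(v_i) = \min_{v \in V(G_i)\setminus F} d_{G_i}(v)$. As $G[F] \subseteq G_i \subseteq G$, this inner minimum is one of the quantities maximized in the definition of $M$, whence $\operatorname{pred}(v_i,\ordering^\ast) \le M$ for each $i$. Therefore $\max_{v \in \overline F}\operatorname{pred}(v,\ordering^\ast) \le M$, which forces $\lambda(F,G) \le M$. Combining the two inequalities gives $\lambda(F,G) = M$; moreover $\ordering^\ast$ realizes this common value, so it is a witness ordering, establishing the ``furthermore'' claim.

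The step I expect to be most delicate is the extremal-vertex argument for $\lambda(F,G) \ge M$: one must verify that the last $\overline F$-vertex of the optimal $H$ has \emph{all} of its $H$-neighbors, those in $F$ as well as those in $\overline F$, among its predecessors, and this is exactly the point where the defining property of $\tilde\orderings$ (the vertices of $F$ coming first) is indispensable. I would also fix the convention that the outer maximum in $M$ ranges only over subgraphs $H$ with $V(H)\setminus F \neq \emptyset$, so that the inner minimum is well defined; the remaining degenerate case $\overline F = \emptyset$ is trivial, as both sides are then empty.
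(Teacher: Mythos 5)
Your proposal is correct and follows essentially the same argument as the paper: the upper bound comes from the greedy selection rule of \vertexelimination applied to the graphs $G_i = G[F\cup\{v_1,\ldots,v_i\}]$, and the lower bound comes from locating, for an optimal subgraph $H$ and an arbitrary ordering in $\tilde\orderings$, the extremal vertex of $V(H)\setminus F$ (your ``last vertex of $V(H)\setminus F$'' is exactly the paper's $v_i$ for the smallest $i$ with $H^{*}\subseteq G_i$). Your explicit remarks on why $\operatorname{pred}(v_i,\ordering^\ast)=d_{G_i}(v_i)$ requires $F$ to come first, and on the degenerate case $\overline F=\emptyset$, are welcome clarifications but do not change the substance.
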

\begin{proof}
The proof is based on the remark on the optimality of \vertexelimination
in~\cite{Matula:68}. Let $\ell = |\overline{F}|$ and for an ordering
$v_1,\ldots,v_\ell$ of $\overline{F}$ let $G_i = G[F \cup \{v_1,\ldots,v_\ell\}]$.
Further, let
\[
	\hat \mindegree := \displaystyle\max_{G[F] \subseteq H \subseteq G} \displaystyle\min_{v \in V(H) \setminus F} \{ \vdegree(v, H) \} \enspace.
\]
Intuitively, $\hat \mindegree$ is analogous to the degeneracy of $G$, but the
vertices $F$ are irrelevant.  If an ordering $\ordering = v_1,\ldots,v_\ell$ of
$\overline{F}$ is an output of
\vertexelimination then
\begin{align*}
	\max_{1\leq i\leq \ell} \operatorname{pred}(v_i, \ordering) & = \max_{1 \leq i \leq \ell}\{\vdegree(v_i, G_i)\} \\
			& = \displaystyle\max_{1 \leq i \leq \ell} \displaystyle\min_{v \in V(G_i)\setminus F} \{ \vdegree(v, G_i) \} \leq \hat \mindegree\enspace.
\end{align*}
The graphs $G_i$ coincide with those in Algorithm~\ref{alg:vertexelimination}. 

Now let $H^{*}$ be a graph such that
$G[F] \subseteq H^{*} \subseteq G$ and
\[
	\min_{v \in V(H^{*}) \setminus F} \{ \vdegree (v, H) \} = \hat \mindegree \enspace.
\]
Let $v_1,\ldots,v_\ell$ be any ordering of $\overline{F}$ and let $i$ be the smallest
index such that $H^{*} \subseteq G_i$. Then $v_i$ must be a vertex of $H^{*}$
and $\vdegree(v_i, G_i) \geq \hat \mindegree$.
Therefore, for any ordering $v_1,\ldots,v_\ell$ of $\overline{F}$,
 $\max_{1 \leq j \leq \ell} \{ \vdegree(v_j, G_j) \} \geq \hat \mindegree$, with
equality if the vertex ordering is an output of \vertexelimination.
\end{proof}

Certainly, the optimality of \vertexelimination is only established with
respect to the subset $\tilde S \subseteq S'$.  The vertex ordering obtained
from the algorithm can potentially be improved by the following post-processing
step: Let $v_1,\ldots,v_{|\overline{F}|}$ be an output of \vertexelimination
and let $k$ be the largest number such that $v_1,\ldots,v_k$ are independent.
Then the vertices $v_1,\ldots,v_k$ can be moved before the vertices $F$ in the
ordering without violating condition~\eqref{eq:orderingcondition}.  The
resulting ordering $\ordering' \in
S'$ is not in $\tilde S$ and can thus not be generated by \vertexelimination.
There is a potential advantage because the construction guarantees that
$\max_{v \in \overline{F}} \operatorname{pred} (v, \ordering') \leq \max_{v \in
	\overline{F}} \operatorname{pred}(v, \ordering)$.  

In summary, the heuristic for computing a vertex ordering $\ordering \in \orderings'(G)$
such that the value $\max_{v \in \overline{F}} \operatorname{pred}(v, \ordering)$ is close to
$\subdegeneracy(F, G)$ performs the following two steps:
\begin{enumerate}
	\item Run \vertexelimination to generate an ordering $v_1,\ldots,v_{|\overline{F}|}$ of the vertices $\overline{F}$.
	\item Let $k \in \mathbb{N}$ be the largest number such that
		$v_1,\ldots,v_k$ are independent in $G$. Move the vertices
		$v_1,\ldots,v_k$ before the vertices $F$ in the ordering. 
\end{enumerate}

\section{Results}
\label{sec:results}

We use the theoretical insights from the previous section to establish the
connectedness of clash-free timetables for a range of \ac{UTP} benchmark
instances. Given a conflict graph $G$, by
Corollary~\ref{cor:kkg_connectedness}, the reconfiguration graph of clash-free
timetables is connected if $p > \degeneracy(G)$. If timeslot availability
requirements are present, we first use the reduction from list to graph
coloring described in Section~\ref{sec:background:vcol} to construct the graph
$\Glist$, which contains the additional clique
$v_1,\ldots,v_p$. We then use the heuristic from the previous section to determine a
bound $\subdegeneracyU(F, \Glist) \geq \subdegeneracy(F, \Glist)$. The set $F$ of
vertices with ``fixed'' colors contains the clique vertices $v_1,\ldots,v_p$
and any other node of $G$ which has only a single available timeslot/color:
\begin{equation}
	F =  \{ v \in V(\Glist) \mid |\Gamma(v) \cap \{v_1,\ldots,v_p\}| = p-1 \}\enspace,\label{eq:fixedvertices}
\end{equation}
where $\Gamma(v)$ denotes the set of vertices adjacent to $v$.
By Corollary~\ref{cor:subdegeneracy_connectedness}, the reconfiguration graphs
of the clash-free timetables that satisfy timeslot availability requirements are
connected if $p > \subdegeneracy(F, \Glist)$.

\begin{table}
	\begin{center}
	\caption{For each instance from the \ac{CB-CTT}, \ac{PE-CTT}, and Erlangen
		instance sets, we give the number $p$ of timeslots, $\degeneracy(G)$ and
			an upper bound $\subdegeneracyU(F, \Glist) \geq \subdegeneracy(F, \Glist)$
			produced by the heuristic. Values in bold face indicate the
			connectedness of the clash-free timetables according to
			Corollaries~\ref{cor:kkg_connectedness}
			and~\ref{cor:subdegeneracy_connectedness}.  \label{tab:connectedness}}
	\small\begin{tabular}{@{}l|c|c|c||l|c|c|c}
	\toprule
	instance & $p$ & $\degeneracy(G)$ &    $\subdegeneracyU(F, \Glist)$ & instance & $p$ & $\degeneracy(G)$ & $\subdegeneracyU(F, \Glist)$\tabularnewline\midrule
	{\tt comp01} & 30 & \bf 23 &   \bf 24      & {\tt ITC2\_i01} &  45  &  91 &   109    \tabularnewline
	{\tt comp02} & 25 & \bf 23 &   30   	   & {\tt ITC2\_i02} &  45  &  99 &   119    \tabularnewline
	{\tt comp03} & 25 & \bf 22 &   27    	   & {\tt ITC2\_i03} &  45  &  73 &   92     \tabularnewline
	{\tt comp04} & 25 & \bf 17 &   25     	   & {\tt ITC2\_i04} &  45  &  78 &   100    \tabularnewline 
	{\tt comp05} & 36 & \bf 26 &   43     	   & {\tt ITC2\_i05} &  45  &  81 &   99     \tabularnewline
	{\tt comp06} & 25 & \bf 17 &   28     	   & {\tt ITC2\_i06} &  45  &  80 &   100    \tabularnewline
	{\tt comp07} & 25 & \bf 20 &   \bf 24      & {\tt ITC2\_i07} &  45  &  80 &   106    \tabularnewline
	{\tt comp08} & 25 & \bf 20 &   \bf 24      & {\tt ITC2\_i08} &  45  &  69 &   97     \tabularnewline
	{\tt comp09} & 25 & \bf 22 &   25     	   & {\tt ITC2\_i09} &  45  &  89 &   108    \tabularnewline
	{\tt comp10} & 25 & \bf 18 &   27     	   & {\tt ITC2\_i10} &  45  &  97 &   116    \tabularnewline
	{\tt comp11} & 45 & \bf 27 &   \bf 27      & {\tt ITC2\_i11} &  45  &  75 &   93     \tabularnewline
	{\tt comp12} & 36 & \bf 22 &   40   	   & {\tt ITC2\_i12} &  45  &  91 &   109    \tabularnewline
	{\tt comp13} & 25 & \bf 17 &   \bf 22	   & {\tt ITC2\_i13} &  45  &  87 &   106    \tabularnewline 
	{\tt comp14} & 25 & \bf 17 &   \bf 23	   & {\tt ITC2\_i14} &  45  &  87 &   107    \tabularnewline
	{\tt comp15} & 25 & \bf 22 &   27    	   & {\tt ITC2\_i15} &  45  &  79 &   106    \tabularnewline
	{\tt comp16} & 25 & \bf 18 &   25    	   & {\tt ITC2\_i16} &  45  &  55 &   83     \tabularnewline
	{\tt comp17} & 25 & \bf 17 &   25    	   & {\tt ITC2\_i17} &  45  &  50 &   71     \tabularnewline
	{\tt comp18} & 36 & \bf 14 &   \bf 32	   & {\tt ITC2\_i18} &  45  &  91 &   112    \tabularnewline
	{\tt comp19} & 25 & \bf 23 &   27    	   & {\tt ITC2\_i19} &  45  &  101&   120    \tabularnewline
	{\tt comp20} & 25 & \bf 19 &   \bf 23	   & {\tt ITC2\_i20} &  45  &  73 &   92     \tabularnewline
	{\tt comp21} & 25 & \bf 23 &   28    	   & {\tt ITC2\_i21} &  45  &  72 &   90     \tabularnewline
	{\tt erl.2011-2} & 30 &  \bf 22 & 32   	   & {\tt ITC2\_i22} &  45  &  98 &   118    \tabularnewline
	{\tt erl.2012-1} & 30 &  \bf 14 & 31   	   & {\tt ITC2\_i23} &  45  &  117&   128    \tabularnewline
	{\tt erl.2012-2} & 30 &  \bf 20 & 32   	   & {\tt ITC2\_i24} &  45  &  77 &   97      \tabularnewline
	{\tt erl.2013-1} & 30 &  \bf 16 & 30   	   & {\tt toy}    & 20 & \bf 10 &  \bf 11  \tabularnewline
	\bottomrule
	\end{tabular}                       
	\end{center}
\end{table}

Table~\ref{tab:connectedness} indicates the connectedness of the clash-free
timetables according to Corollaries~\ref{cor:kkg_connectedness}
and~\ref{cor:subdegeneracy_connectedness} for instances from the \ac{CB-CTT},
\ac{PE-CTT} benchmark sets, as well as instances from the University of
Erlangen-N{\"u}rnberg. All instances are available from the website \cite{ITC2007:web}. The instances
\texttt{comp01},\ldots,\texttt{comp21} are from the \ac{CB-CTT} track of the
\ac{ITC2007} competition. The instances
\texttt{ITC2\_i01},\ldots,\texttt{ITC2\_i24} are from the \ac{PE-CTT} track of
the same competition. The \erlangen instances are large real-world instances
from the engineering department of the University of Erlangen-N{\"u}rnberg. The
\texttt{toy} instance is a small example instance from the
website \cite{ITC2007:web}.  For each instance we give the
number of timeslots $p$, the degeneracy of the conflict graph $\degeneracy(G)$,
and the bound $\subdegeneracyU(F, \Glist) \geq \subdegeneracy(F, \Glist)$. Table
entries in bold face indicate that the corresponding value $\degeneracy(G)$ or
$\subdegeneracyU(F, \Glist)$ certifies the connectedness of the clash-free
timetables.

According to the data in Table~\ref{tab:connectedness} the clash-free timetables for all \ac{CB-CTT}
and \erlangen instances are connected, while the conditions imposed by 
Corollary~\ref{cor:kkg_connectedness} are not satisfied for any of the
\ac{PE-CTT} instances. For eight \ac{CB-CTT} instances, the upper bound on
$\subdegeneracy(F, \Glist)$ is sufficient to show that the reconfiguration graphs
are connected in the presence of timeslot availability constraints. For the
\ac{PE-CTT} instances, since neither $\degeneracy(G)$ nor $\subdegeneracyU(F,
\Glist)$ certifies the connectedness of the reconfiguration graphs, better
bounds on $\subdegeneracy(F, \Glist)$ are of no use, since $\subdegeneracy(F,
\Glist) \geq \degeneracy(G)$. Therefore, new techniques are needed for proving
the connectedness (or disconnectedness) of the reconfiguration graphs for these
instances. A possible reason for this structural difference between the
\ac{CB-CTT} and \ac{PE-CTT} instances is that the course specification in the
former leads to lots of small cliques in the conflict graph, a fact that we
will use shortly to determine the subdegeneracy of the conflict graph for the
\ac{CB-CTT} instance \texttt{toy}. In contrast, in the \ac{PE-CTT} problem
formulation and also the instances from~\cite{LP:instances}, the event conflicts
depend on the students' individual choices, which apparently leads to denser
graphs, i.e., graphs with higher degeneracy.

\begin{table}
	\begin{center}
	\caption{The connectedness of the clash-free timetables for the
		instances from~\cite{LP:instances}. For each instance we
		give the degeneracy $\degeneracy(G)$ of the conflict graph $G$.  Values
		in bold face indicate the connectedness of clash-free timetables according to Corollary~\ref{cor:kkg_connectedness}.
		\label{tab:metaheuristicdegeneracy}}
	\small\begin{tabular}{@{}l|c||l|c||l|c}
	\toprule
	instance &  $\degeneracy(G)$ &    instance &  $\degeneracy(G)$ & instance &  $\degeneracy(G)$ \tabularnewline\midrule
\texttt{small\_1}    &  54   & \texttt{med\_1}      &  59    &   \texttt{big\_1}    &  60   \tabularnewline        
\texttt{small\_2}    &\bf 41 & \texttt{med\_2}      &  67    &   \texttt{big\_2}    &  68   \tabularnewline        
\texttt{small\_3}    &  98   & \texttt{med\_3}      &  67    &   \texttt{big\_3}    &  64   \tabularnewline        
\texttt{small\_4}    &  69   & \texttt{med\_4}      &  69    &   \texttt{big\_4}    &  80   \tabularnewline        
\texttt{small\_5}    &  84   & \texttt{med\_5}      &  87    &   \texttt{big\_5}    &  75   \tabularnewline        
\texttt{small\_6}    &\bf 24 & \texttt{med\_6}      &  101   &   \texttt{big\_6}    &  93   \tabularnewline        
\texttt{small\_7}    &  68   & \texttt{med\_7}      &  120   &   \texttt{big\_7}    &  111  \tabularnewline        
\texttt{small\_8}    &  84   & \texttt{med\_8}      &  98    &   \texttt{big\_8}    &  82   \tabularnewline        
\texttt{small\_9}    &  124  & \texttt{med\_9}      &  121   &   \texttt{big\_9}    &  77   \tabularnewline        
\texttt{small\_10}   &  136  & \texttt{med\_10}     &  64    &   \texttt{big\_10}   &  77   \tabularnewline        
\texttt{small\_11}   &\bf 34 & \texttt{med\_11}     &  97    &   \texttt{big\_11}   &  76   \tabularnewline        
\texttt{small\_12}   &\bf 22 & \texttt{med\_12}     &  78    &   \texttt{big\_12}   &  76   \tabularnewline        
\texttt{small\_13}   &  146  & \texttt{med\_13}     &  105   &   \texttt{big\_13}   &  84   \tabularnewline        
\texttt{small\_14}   &  100  & \texttt{med\_14}     &  92    &   \texttt{big\_14}   &  74   \tabularnewline        
\texttt{small\_15}   &  79   & \texttt{med\_15}     &  101   &   \texttt{big\_15}   &  127  \tabularnewline        
\texttt{small\_16}   &  118  & \texttt{med\_16}     &  145   &   \texttt{big\_16}   &  115  \tabularnewline        
\texttt{small\_17}   &  120  & \texttt{med\_17}     &  126   &   \texttt{big\_17}   &  184  \tabularnewline        
\texttt{small\_18}   &  60   & \texttt{med\_18}     &  188   &   \texttt{big\_18}   &  131  \tabularnewline        
\texttt{small\_19}   &  141  & \texttt{med\_19}     &  173   &   \texttt{big\_19}   &  159  \tabularnewline        
\texttt{small\_20}   &\bf 28 & \texttt{med\_20}     &  153   &   \texttt{big\_20}   &  144  \tabularnewline        
	\bottomrule
	\end{tabular}                       
	\end{center}
\end{table}

\begin{table}
	\begin{center}
	\caption{The connectedness of the clash-free timetables for the
		Metaheuristic Network instances from~\cite{MN:instances}.  For each instance we
			give the degeneracy $\degeneracy(G)$ of the conflict graph $G$.
			Values in bold face indicate the connectedness of clash-free
			timetables according to Corollary~\ref{cor:kkg_connectedness}.
			\label{tab:metaheuristicdegeneracy2}}
	\small\begin{tabular}{@{}l|c||l|c||l|c}
	\toprule
	instance &  $\degeneracy(G)$ &    instance &  $\degeneracy(G)$ & instance &  $\degeneracy(G)$ \tabularnewline
	\midrule
	\texttt{easy01}  &  \bf 15 & \texttt{medium01} &   49 &\texttt{hard01}  &   68 \tabularnewline
	\texttt{easy02}  &  \bf 19 & \texttt{medium02} &   53 &\texttt{hard02}  &   67 \tabularnewline
	\texttt{easy03}  &  \bf 13 & \texttt{medium03} &   52 &&\tabularnewline
	\texttt{easy04}  &  \bf 12 & \texttt{medium04} &   51 &&\tabularnewline
	\texttt{easy05}  &  \bf 20 & \texttt{medium05} &   47 &&\tabularnewline
	\bottomrule
	\end{tabular}                       
	\end{center}
\end{table}

In Tables~\ref{tab:metaheuristicdegeneracy}
and~\ref{tab:metaheuristicdegeneracy2}, the degeneracy values of the
corresponding conflict graphs are given for the instance sets
from \cite{LP:instances} and \cite{MN:instances}. On these
instances, each timeslot is available for each event. Values in bold face
indicate the connectedness of clash-free timetables is established by
Corollary~\ref{cor:kkg_connectedness}.

Finally, we will show that for the \ac{CB-CTT} instance \texttt{toy}, the proposed
heuristic yields an optimal vertex ordering, i.e., a witness for
$\subdegeneracy(F, \Glist)$. The instance has in total 20 timeslots and 16 events.
In the \ac{CB-CTT} formulation, the events are grouped into courses. Any two
events of a course are conflicting, that is, the events of a course are a
clique in the conflict graph. Whenever
two courses are conflicting, no two of the corresponding events may be
scheduled in the same timeslot. For completeness, the relevant data on events,
conflicts and unavailable timeslots is given in Table~\ref{tab:toy}.

\begin{table}
  \caption{Instance data of the instance \texttt{toy}, available from the website~\cite{ITC2007:web}.}
  \begin{center}
	\begin{tabular}{r|c|c|c}
	  \toprule
	  Course	&	Events	&	Conflicts 	&	Unavailable timeslots	\tabularnewline\midrule
	  TecCos	&	5		&	SceCosC, ArcTec, Geotec		&	8,9,14,15				\tabularnewline
	  ArcTec	&	3		&	SceCosC, TecCos				&	16,17,18,19					\tabularnewline
	  SceCosC	&	3		&	ArcTec, TecCos				&	--\tabularnewline
	  Geotec	&	5		&	TecCos						&	--\tabularnewline
	  \bottomrule
	\end{tabular}
  \end{center}
  \label{tab:toy}
\end{table}

Let $G$ be the conflict graph of the instance \texttt{toy}
and let $\Glist$ be the graph that results from the reduction from list to
graph coloring. If two courses are in conflict, then the events of both
courses are a clique in $G$. 
If certain timeslots are unavailable for a
course, then the events of the course and then these timeslots form a clique in
$\Glist$. Figure~\ref{fig:vertexordering} shows
a succinct representation of
the graph $\Glist$. The nodes $T$, $A$, $S$ and $G$ correspond to event cliques
of the courses TecCos, ArcTec, SceCosC, and Geotec, respectively. 
The node $P_1$ represents the timeslots marked unavailable for the course
ArcTec and the node $P_2$ represents the timeslots unavailable for SceCosC.
Since no other timeslots are excluded, the
corresponding vertices in the graph $\Glist$ will not contribute to the
subdegeneracy and can be ignored.  As a result we get a clique on eight nodes
that model the timeslot availability requirements. This clique is divided
separated into the two cliques $P_1$ and $P_2$. Two nodes of the shown graph
are connected whenever all nodes of the two corresponding cliques are
connected.

\begin{figure}
	\begin{center}
	\begin{tikzpicture}[every pin edge/.style={black!50},vertex/.style={shape=circle,draw,minimum size=2.5em,node distance=6em},conflict/.style={draw,thick}]
		\node[vertex] (p1) [pin=below:{$K_4$}] {$P_1$};
		\node[vertex] (p2) [right of=p1,pin=below:{$K_4$}] {$P_2$};
		\node[vertex] (T)  [right of=p2,pin=below:{$K_5$}] {T};
		\node[vertex] (A)  [right of=T,pin=below:{$K_3$}] {A};
		\node[vertex] (S)  [right of=A,pin=below:{$K_3$}] {S};
		\node[vertex] (G)  [right of=S,pin=below:{$K_5$}] {G};

		\draw[conflict] (p1) to [bend left=55] (p2);
		\draw[conflict] (p1) to [bend left=55] (A);
		\draw[conflict] (p2) to [bend left=55] (T);
		\draw[conflict] (T) to [bend left=55] (A);
		\draw[conflict] (T) to [bend left=55] (S);
		\draw[conflict] (A) to [bend left=55] (S);
		\draw[conflict] (T) to [bend left=55] (G);
	\end{tikzpicture}
	\end{center}
	\caption{Succinct representation of the graph $\Glist$, where $G$ is the
	  conflict graph of the instance \texttt{toy}. All nodes represent cliques
	  as denoted indicated the nodes. Arranging the clique vertices in the
	  shown left-to-right ordering yields a witness of $\subdegeneracy(F,
	  \Glist) = 11$.\label{fig:vertexordering}}
\end{figure}
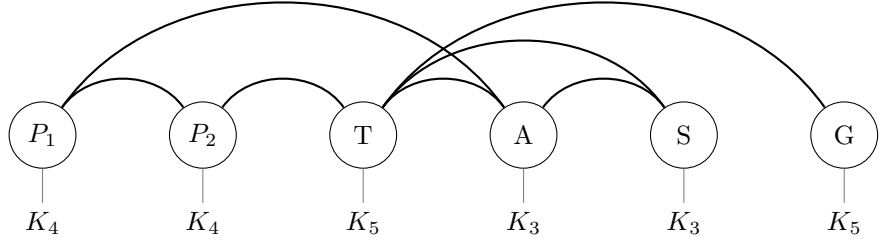
Let $F = V(P_1) \cup V(P_2)$
and let $\ordering \in \orderings(\Glist)$ such that the cliques are
arranged in the order $P_1,P_2,T,A,S,G$ with some arbitrary choice of the
relative ordering of the vertices within each clique. This ordering is a
possible output of the algorithm \vertexelimination. From
\[
	\max_{v \in V(\Glist) \setminus F} \operatorname{pred}(v,\ordering) = 11\enspace,
\]
we can conclude that $\subdegeneracy(F, \Glist) \leq 11$. 

\begin{proposition}
For the instance \texttt{toy}, $\subdegeneracy(F, \Glist) = 11$.
\end{proposition}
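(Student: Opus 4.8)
The upper bound $\subdegeneracy(F, \Glist) \leq 11$ is already in hand from the explicit left-to-right ordering $P_1,P_2,T,A,S,G$, so the task reduces to proving the matching lower bound $\subdegeneracy(F, \Glist) \geq 11$, i.e.\ that \emph{every} admissible ordering $\ordering \in \orderings'$ forces some vertex of $\overline{F}$ to have at least $11$ adjacent predecessors. The plan is to exhibit a single subgraph in which all non-fixed vertices have degree exactly $11$, and then argue that the ordering constraint~\eqref{eq:orderingcondition} prevents any admissible ordering from avoiding this degree at its ``last'' vertex.

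Concretely, I would work with the induced subgraph $H = \Glist[F \cup V(T) \cup V(A)]$ obtained by deleting the cliques $S$ and $G$. A direct degree count in $H$ gives $11$ for every non-fixed vertex: each vertex of $T$ (a $\clique{5}$) sees its $4$ clique-neighbors, the $4$ vertices of $P_2$, and the $3$ vertices of $A$, while each vertex of $A$ (a $\clique{3}$) sees its $2$ clique-neighbors, the $4$ vertices of $P_1$, and the $5$ vertices of $T$. Hence $\min_{v \in V(H)\setminus F} d_H(v) = 11$, whereas the cliques $S$ and $G$ are deliberately discarded because their vertices have full degree only $10$ and $9$ and would otherwise weaken the bound.

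The heart of the argument is the following. Fix any $\ordering \in \orderings'$ and let $x^\ast$ be the vertex of $V(T)\cup V(A)$ that appears last in $\ordering$. Since $T$ and $A$ are cliques that are moreover completely joined, $x^\ast$ has at least one non-fixed neighbor, and every non-fixed neighbor of $x^\ast$ lying in $H$ belongs to $V(T)\cup V(A)$ and therefore precedes $x^\ast$. Because $x^\ast$ thus has a non-fixed neighbor preceding it, condition~\eqref{eq:orderingcondition} applies with $v = x^\ast$ and forces every fixed neighbor of $x^\ast$ (all of which lie in $F$) to precede $x^\ast$ as well. Consequently all $11$ neighbors of $x^\ast$ in $H$ precede it, so $\operatorname{pred}(x^\ast,\ordering) \geq 11$ and hence $\max_{v\in\overline{F}}\operatorname{pred}(v,\ordering)\geq 11$. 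As $\ordering$ was arbitrary, $\subdegeneracy(F,\Glist)\geq 11$, which together with the upper bound yields equality.

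The step I expect to be the crux is the correct invocation of~\eqref{eq:orderingcondition}: the whole argument hinges on the observation that the \emph{last} non-fixed vertex of the chosen subgraph automatically satisfies the hypothesis ``$u < v$ for some non-fixed neighbor $u$'', which in turn promotes all of its fixed neighbors ($P_1$ for an $A$-vertex, $P_2$ for a $T$-vertex) into predecessors. The accompanying subtlety is the choice of $H$: a naive application to all of $\Glist$ would let an adversarial ordering place a degree-$9$ vertex of $G$ last and only yield the bound $9$, so it is essential to restrict attention to the subgraph on $F\cup V(T)\cup V(A)$ where \emph{all} surviving non-fixed vertices have degree $11$.
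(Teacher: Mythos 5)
Your proof is correct, but it takes a genuinely different route from the paper's. The paper reduces the problem to a finite enumeration: it observes that for each event clique $K \in \{T,A,S,G\}$ the quantity $\max_{v \in K}\operatorname{pred}(v,\ordering)$ is governed by the last vertex of $K$, argues that $G$ may be assumed last, and then tabulates the six remaining permutations of $\{T,A,S\}$ to read off the min-max value $11$. You instead isolate the joined clique $V(T)\cup V(A)$ (a $\clique{8}$), note that every non-fixed vertex of $H=\Glist[F\cup V(T)\cup V(A)]$ has degree exactly $11$ there, and argue that the last vertex $x^\ast$ of $V(T)\cup V(A)$ in any admissible ordering necessarily has a non-fixed neighbour preceding it, so that condition~\eqref{eq:orderingcondition} promotes its four fixed neighbours ($P_2$ if $x^\ast\in V(T)$, $P_1$ if $x^\ast\in V(A)$) into predecessors as well, giving $\operatorname{pred}(x^\ast,\ordering)\geq 7+4=11$. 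The degree counts ($4+3+4$ and $2+5+4$) check out, and your deliberate exclusion of $S$ and $G$ is exactly right, since including them would cap the argument at $9$. What your approach buys is twofold: it avoids the case enumeration entirely, and, more importantly, it establishes the lower bound uniformly over all of $S'$, whereas the paper's proof quietly restricts the minimization to the subset $\tilde S$ of orderings in which $F$ comes first (a restriction that a priori could only increase the minimum and is not separately justified there). What the paper's enumeration buys in exchange is the complete profile of achievable values across clique orderings, which makes it transparent that the heuristic's output is optimal. Your argument is a clean and, in this respect, more self-contained substitute.
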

\begin{proof}
Let $\ordering$ be an ordering of $V(\Glist)$ and $V' \subseteq V(\Glist) \setminus F$. The
maximum number of predecessors adjacent to any vertex of $V'$ in $\Glist$ is denoted by
\[
	p(V',\ordering) = \max_{v \in V'} \operatorname{pred}(v,\ordering)\enspace.
\]	
Note that for a clique $K \in \{ T, A, S, G \}$, the value $p(K,\ordering)$ is
determined by the last vertex of $K$ in $\ordering$. Thus, the value of
$p(K,\ordering)$ depends only on the relative order of the last vertices of the
cliques $\{T, A, S, G\}$ in $\ordering$. Let $\tilde S$ be the vertex orderings
of $\Glist$ such the vertices $F$ precede all other vertices of $\Glist$ and let $\hat
S$ be the total orderings of $\{T, A, S, G\}$. For each ordering $\ordering' \in \hat
S$ we can pick an ordering $\ell(\ordering')$ of $\Glist$ that is compatible with
$\ordering'$ in the sense that the relative ordering of the last vertices of
the cliques is in accordance with $\ordering'$. We have, 
\[
\subdegeneracy(F, \Glist) =
\min_{\ordering \in \tilde S} \max_{K \in \{T, A, S, G\} } p(K, \sigma) = \min_{\ordering' \in \hat S} \max_{K \in \{T, A, S, G\}} p(K, \ell(\sigma'))\enspace.
\]
For any ordering $\ordering' \in \hat S$ such that $G < T$, we have $p(T,
\ell(\ordering')) \geq 13$, because the last vertex of $T$ has
at least 13 adjacent predecessors in $\Glist$. Thus, we only need to consider
orderings such that $G > T$. Furthermore, since no vertex of $G$ is adjacent to
any vertex of $A$ or $S$, changing the relative order of $A$ and $G$ or $S$ and
$G$ does not change the number of adjacent predecessors. Hence, we can assume
$G$ is a maximum in any ordering of interest. We enumerate the values of
$p(K,\ell(\ordering'))$ all for $K \in \{T, A, S, G\}$ for the 6 permutations
of $\{T, A, S\}$:
\begin{center}
\begin{tabular}{l|c|c|c|c}
\toprule
clique ordering $\ordering' \in \hat S$     & $p(T,\ell(\ordering'))$ & $p(A,\ell(\ordering'))$ & $p(S,\ell(\ordering'))$ & $p(G,\ell(\ordering'))$ \tabularnewline\midrule
$T, A, S, G$ & 8      & 11     & 10     & 9      \tabularnewline
$T, S, A, G$ & 8      & 14     & 7      & 9      \tabularnewline
$A, T, S, G$ & 11     & 6      & 10     & 9      \tabularnewline
$S, T, A, G$ & 11     & 11     & 2      & 9      \tabularnewline
$A, S, T, G$ & 14     & 6      & 5      & 9      \tabularnewline
$S, A, T, G$ & 14     & 9      & 2      & 9      \tabularnewline
\bottomrule
\end{tabular}
\end{center}
Thus, 
\[
	\subdegeneracy(F, \Glist) = \min_{\ordering' \in \hat S} \max_{K \in \{T, A, S, G\}} p(K, \ell(\ordering')) = 11
\]
\end{proof}

We can conclude that the proposed heuristic produces a witness of
$\subdegeneracy(F, \Glist) = 11$ on the instance \texttt{toy}.

\section{Conclusions}

We investigated the connectedness of clash-free timetables with respect to the
Kempe-exchange operation. This investigation is related to the connectedness of
the search space of timetabling problem instances, which is a desirable
property, for example for two-step algorithms using the Kempe-exchange during
the optimization step. We include timeslot availability requirements in our
analysis and derive improved conditions for the connectedness of clash-free
timetables in this setting. For this purpose, we introduced the notion of
subdegeneracy, which generalizes the degeneracy of a graph. The complexity of
determining the subdegeneracy is an interesting open problem.
We further showed that our representation of the 
search space of clash-free timetables that satisfy timeslot availability
requirements is a suitable one with respect to the connectedness properties and
the diameter of the search space. Our results indicate the connectedness of the
clash-free timetables for a number of benchmark instances.

For future research, other properties of feasible timetables such as
overlap-freeness may be considered as well. Furthermore, two kinds of possible
improvements may be considered with respect to establishing the connectedness
of clash-free timetables in the presence of timeslot availability requirements:
Both, a better analysis of Algorithm~\ref{alg:kemperecolor} and a better
heuristic approach (or exact algorithm) for determining the subdegeneracy may
lead to a lower number of timeslots required to certify the connectedness of
clash-free timetables.

\bibliographystyle{plain}
\bibliography{literature}   

\end{document}